\definecolor{EPCPbooktitlecolor}{RGB}{250,237,204} 
\definecolor{EPCPbooktitlecolorbright}{RGB}{255,249,231} 
\definecolor{darkblue}{RGB}{58,0,184} 
\definecolor{EPCPviolet}{RGB}{71,18,202}
\definecolor{EPCPocher}{RGB}{130,76,1} 
  \newcommand{\tmpmacrosfolder}{.}%
\newcommand{\includechapterauthors}[1]{
\par\noindent
\textbf{Authors:} {#1}\\
}
\newcommand{\chapteracknowledgments}[1]{
\par\noindent
\textbf{Contributions:} {#1}\\[-3ex]
}
\newcommand{\chapterkeywords}[1]{
\par\noindent
\textbf{Keywords:} {#1}\\[-3ex]
}
\newcommand{\comment}[1]{{\color{red}#1}}
\newcommand{\chapterroot}{./chapters/TEMPLATE/}
\newcommand{\copyrightinfo}[1]
  {{\color{magenta}\textsc{Copyright Info:} #1}}
\newcommand{\hideinbooklayout}[1]{#1}
\newcommand{\orcid}[1]{}
\newcommand{\affiliations}[1]{}
\newcommand{\affiliationnum}[1]{}
		\newcommand{\abstract}[1]{{\centerline{\textbf{Abstract}}: #1}\ \\[2mm]}
		\newcommand{\abstract}[1]{{\noindent\textbf{Abstract}: #1}}
\newcommand{\abstract}[1]{}}
		\newcommand{\chaptercitation}[3]{\textbf{To cite this chapter:} #1. #2. Version July 2023.
			The Economic Cell Collective. Economic Principles in Cell Biology (2023).
			No commercial publisher | Authors open access book.\ \\[5mm]
			The authors are listed in alphabetical order.
			{}
			\vfill
			\parbox{4cm}{\includegraphics[width=2cm]{../images/qr-code-book-economic-principles-in-biology-black.png}}
			\parbox{12cm}{This is a chapter from the free textbook ``Economic Principles in Cell Biology''.\\
				Free download from \href{https://principlescellphysiology.org/book-economic-principles/index.html}
				{principlescellphysiology.org/book-economic-principles/}.
				\iftoggle{slidesexist}
				{\ \\ Lecture slides for this chapter are available on the website.}}
			
			\parbox{4cm}{
				\href{https://creativecommons.org/licenses/by-sa/4.0/}{\includegraphics[width=3cm]{../images/by-sa.png}}
			}
			\parbox{12cm}{
				\copyright ~2023 The Economic Cell Collective.\\
				Licensed under Creative Commons License CC-BY-SA 4.0.\\
				An authors open access book. No publisher has been paid.        
			}
			\clearpage
			\setcounter{page}{1}}
		\newcommand{\chaptercitation}[3]{\textbf{Citation:} [#1] [#2]}
		\renewcommand{\chapteracknowledgments}[1]{\noindent\textbf{Contributions:} #1}
		\renewcommand{\chapterkeywords}[1]{\noindent\textbf{Keywords:} #1}
	\newcommand{\chaptercitation}[3]{}
	\renewcommand{\orcid}[1]{\href{https://orcid.org/#1}{~\includegraphics[width=4mm]{../images/orcid_logo.jpg}}}
	\renewcommand{\affiliations}[1]{\noindent \begin{sloppypar} \setlength{\parindent}{0pt}{#1} \end{sloppypar} \  \\}
	\renewcommand{\affiliationnum}[1]{${}^{#1}$}
\renewcommand{\chapterkeywords}[1]{}
			\renewcommand{\chapteracknowledgments}[1]{}
	\renewcommand{\orcid}[1]{\href{https://orcid.org/#1}{~[ORCID:#1]}}
	\renewcommand{\affiliations}[1]{\begin{sloppypar} \setlength{\parindent}{0pt}{#1} \end{sloppypar} \ \\}
	\renewcommand{\affiliationnum}[1]{${}^{#1}$}
	\renewcommand{\hideinbooklayout}[1]{}
	\renewcommand{\copyrightinfo}[1]{}
	\renewcommand{\todo}[1]{}
	\renewcommand{\comment}[1]{}
\newmdenv[style=EPCPBlocks]{EPCPBlockMD}%
\newcommand{\EPCPBlockTitleFont}{\normalfont\bfseries}%
\newcommand{\EPCPBlockTitleFontColor}{EPCPocher}
\newcommand{\EPCPBlockLineColor}{EPCPocher}
\newcommand{\EPCPBlockTitleColor}{EPCPbooktitlecolor}
\newcommand{\EPCPBlockBGColor}{EPCPbooktitlecolorbright}
\newmdenv[style=EPCPBlocks,frametitle={\color{\EPCPBlockTitleFontColor}Chapter highlights}]
{chapterhighlights}
\newcommand{\EPCP@FloatSpec}{X}%
\newcommand{\EPCP@BlockTitle}{}%
\newcommand{\EPCP@BlockLabel}{}%
\newcounter{dummycounter}\setcounter{dummycounter}{0}%
\newenvironment{EPCPBlock}[3][]%
{
  \setkeys{EPCPBlocks}{#1}%
  \ifthenelse{\equal{\EPCP@FloatSpec}{X}}
  {
  \begin{EPCPBlockMD}%
  [frametitle={\color{\EPCPBlockTitleFontColor}#2#3\EPCP@BlockTitle}]
  }
  {
  \floatplacement{EPCPFloatBlock}{\EPCP@FloatSpec}%
  \begin{EPCPFloatBlock}\begin{EPCPBlockMD}%
    [frametitle={\color{\EPCPBlockTitleFontColor}#2#3\EPCP@BlockTitle}]%
  }%
}
{
  \ifthenelse{\equal{\EPCP@FloatSpec}{X}}
  {
  \end{EPCPBlockMD}
  }
  {
  \end{EPCPBlockMD}\end{EPCPFloatBlock}
  }
}
\newcounter{epcpboxcounter}\setcounter{epcpboxcounter}{0}
\renewcommand{\theepcpboxcounter}{\arabic{chapter}.\Alph{epcpboxcounter}}
\theepcpboxcounter\label{\EPCP@BlockLabel}
\newenvironment{Ecoblock*}[1][]%
{\begin{EPCPBlock}[#1]{Economic analogy}{}}
{\end{EPCPBlock}}
\theepcpboxcounter\label{\EPCP@BlockLabel}
\newenvironment{Philblock*}[1][]%
{\begin{EPCPBlock}[#1]{Philosophical remarks}{}}
{\end{EPCPBlock}}
\theepcpboxcounter\label{\EPCP@BlockLabel}
\newenvironment{Expblock*}[1][]%
{\begin{EPCPBlock}[#1]{Experimental methods}{}}
{\end{EPCPBlock}}
\theepcpboxcounter\label{\EPCP@BlockLabel}
\newenvironment{MathDetailblock*}[1][]%
{\begin{EPCPBlock}[#1]{Mathematical details}{}}
{\end{EPCPBlock}}
\theepcpboxcounter\label{\EPCP@BlockLabel}
\newenvironment{Generalblock*}[1][]%
{\begin{EPCPBlock}[#1]{}{}}
	{\end{EPCPBlock}}
\newcounter{exercisecntr}
\definecolor{refkey}{rgb}{.65,.65,.65}
\definecolor{labelkey}{rgb}{.65,.65,.65}
\setlist{noitemsep}
\titleformat{\chapter}{\Huge\sffamily\bfseries}{\chaptername ~ \thechapter}{1em}{}
\newtheorem{lemm}{Lemma}[section]
\newtheorem{theo}{Theorem}[section]
\newtheorem{prop}{Proposition}[section]
\begin{document}

\renewcommand{\chapterroot}{./}                                         
\fi                                                                     




\setcounter{chapter}{9}
\chapter{Intelligent Reflecting Surfaces and Next Generation Wireless Systems}
\label{chapt}

\hideinbooklayout{
\begin{center}
  {\large\textbf{Version: \today}}%
\end{center}
}

\includechapterauthors{Yashuai Cao\affiliationnum{1}, Hetong Wang\affiliationnum{2}, Tiejun Lv\affiliationnum{2}, and Wei Ni\affiliationnum{3}}

\affiliations{
${}^1$ Department of Electronic Engineering, Tsinghua University, Beijing 100084, China, and also with the Beijing National Research Center for Information Science and Technology (BNRist), Beijing 100084, China ({caoys@tsinghua.edu.cn}) \\
${}^2$ School of Information and Communication Engineering, Beijing University of Posts and Telecommunications (BUPT), Beijing 100876, China (\{htwang\_61, lvtiejun\}@bupt.edu.cn)\\
${}^3$ Commonwealth Scientific and Industrial Research Organisation, Sydney, NSW 2122, Australia ({wei.ni@data61.csiro.au})
}

\abstract{
	Intelligent reflecting surface (IRS) is a potential candidate for massive multiple-input multiple-output (MIMO) 2.0 technology due to its low cost, ease of deployment, energy efficiency and extended coverage.
	This chapter investigates the slot-by-slot IRS reflection pattern design and two-timescale reflection pattern design schemes, respectively. 
	For the slot-by-slot reflection optimization, we propose exploiting an IRS to improve the propagation channel rank in mmWave massive MIMO systems without need to increase the transmit power budget. Then, we analyze the impact of the distributed IRS on the channel rank.
	To further reduce the heavy overhead of channel training, channel state information (CSI) estimation, and feedback in time-varying MIMO channels, we present a two-timescale reflection optimization scheme, where the IRS is configured relatively infrequently based on statistical CSI (S-CSI) and the active beamformers and power allocation are updated based on quickly outdated instantaneous CSI (I-CSI) per slot. The achievable average sum-rate (AASR) of the system is maximized without excessive overhead of cascaded channel estimation. A recursive sampling particle swarm optimization (PSO) algorithm is developed to optimize the large-timescale IRS reflection pattern efficiently with reduced samplings of channel samples.
}

\noindent \textbf{Keywords:} Intelligent reflecting surface, distributed IRS deployment, slot-by-slot optimization, two-timescale optimization.



\section{Introduction}\label{sec:1}
Massive multiple-input multiple-output (MIMO) and beamforming are two advanced techniques that can improve the performance and efficiency of wireless networks~\cite{6798744}. However, they also pose significant challenges in terms of hardware and software complexity and cost. Besides, extremely high energy consumption prohibits practical deployment. Thus, scalability and compatibility should be considered as the system needs to scale up to accommodate a large number of users and antennas.


From the perspective of single base station (BS), the average power consumption of fifth-generation (5G) BS can reach approximately 3 \textasciitilde 3.5 times than that of fourth-generation (4G) BS~\cite{fall2023towards}. 
From the network perspective, the network densification deployment and massive MIMO application led to a sharp rise in power consumption of 5G systems. According to~\cite{han2020energy}, 5G systems would consume 12 times more power than 4G networks.
Compared to the 4T4R MIMO in 4G Remote Radio Unit (RRU), the active antenna unit (AAU) in 5G BS adopts 64T64R MIMO. With the increasing channels, the maximum power consumption of a single AAU can reach 1000 \textasciitilde 1400W, and each BS usually consists of 3 AAUs.
Meanwhile, the baseband unit (BBU) in 5G BSs also consumes more power due to the computation increment.
Fig.~\ref{fig:1} compares the maximum power consumption of a single BS between 4G and 5G, where the maximum power consumption of AAU/RRU and BBU components in 5G BSs increases 68\% compared with that in 4G.
As reported in~\cite{shurdi20215g}, the energy cost takes over 19\% \textasciitilde 23\% of radio access network (RAN) operating expenses (OpEx). By 2030, the overall power consumption of 5G BSs would indirectly cause 990,404 tonnes of carbon footprint, which is unacceptable for economic and industrial development and environmental protection.
Particularly for millimeter wave (mmWave) massive MIMO networks, signal blockage would inevitably increase the transmitted energy to guarantee the basic communication quality of service (QoS).

\begin{figure}
  \centering
  \includegraphics[width=4in]{./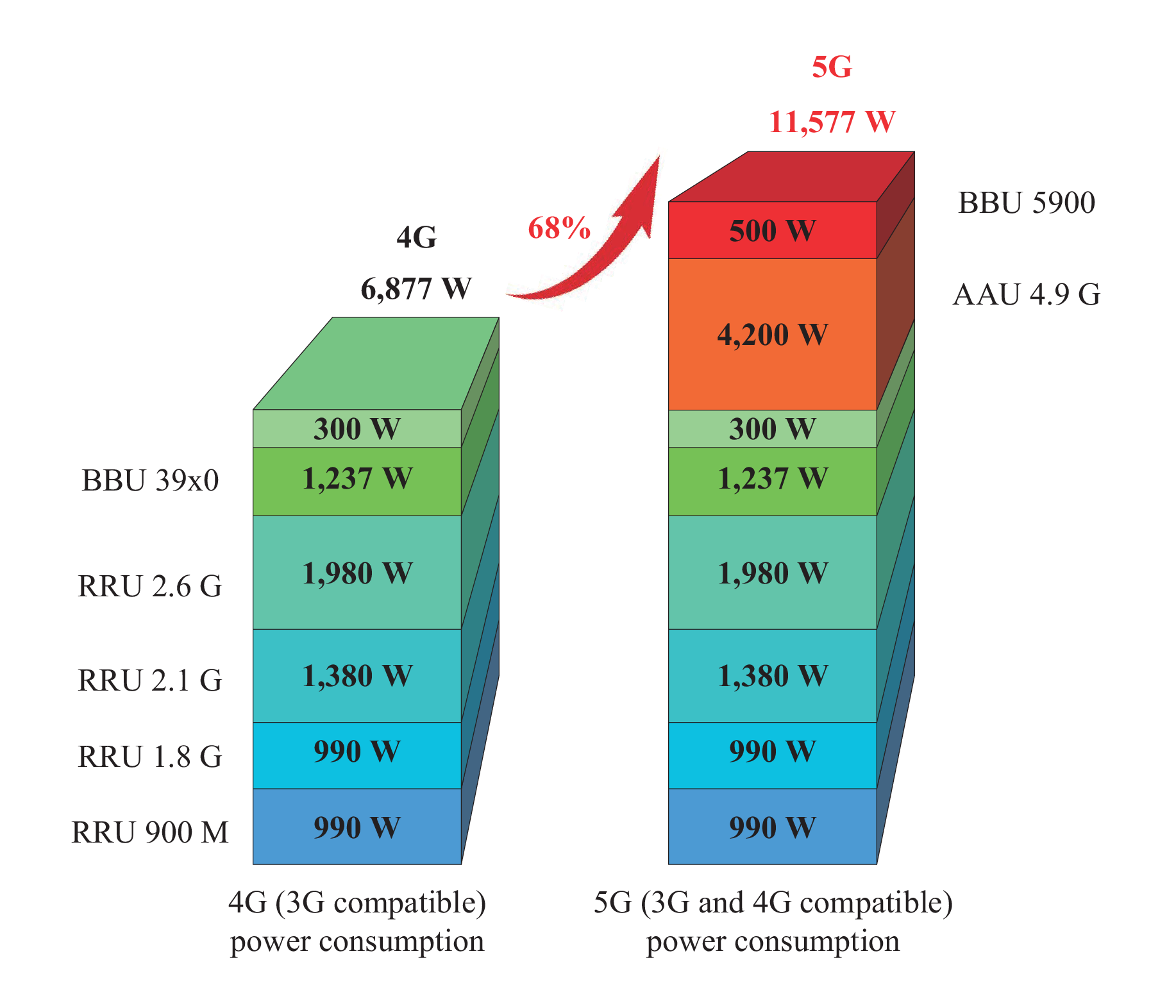}
  \caption{Comparison of power consumption between 4G and 5G BSs~\cite{shurdi20215g}.}
  \label{fig:1}
\end{figure}

Against the background, intelligent reflecting surface (IRS) has been regarded as a promising low-cost technology for massive MIMO systems. 
The concept of IRS has originated from electromagnetic (EM) metamaterial technology based on the generalized Snell's law. IRS comprises massive low-cost passive meta-atoms and a smart controller, which can intentionally adjust the phase shift of reflected EM waves.
The revolutionary idea of IRS is to proactively engineer radio signal propagation while traditional MIMO systems only adapt to wireless environments. Several passive meta-atoms can achieve the ability to control the propagation channel. This means a powerful received signal enhancement with low-cost passive reflective elements instead of active transmit radio frequency (RF) chains.

Table~\ref{Table: 10.1} compares the concept of massive MIMO with IRS. The massive MIMO adopts the operating mechanism of transmitting and receiving signals by RF chains where analog-to-digital converters (ADCs), digital-to-analog converters (DACs), power amplifiers and various hardware are required to realize signal processing. Conversely, the IRS can passively reflect the incident signals and empower the smart radio environment to intelligently reconfigure the wireless propagation environment, which avoids introducing additional noise and leads to negligible power consumption. The optimized IRS reflection modes can achieve constructive or destructive combinations for different users, thus enhancing the overall performance gain of the communication system. Therefore, IRS can be used to promote the journey towards the next level of massive MIMO, i.e., massive MIMO 2.0~\cite{8861014}.


\begin{table}[!ht]
    \centering
    \caption{\centering{Comparison of massive MIMO and IRS}}
    \renewcommand{\arraystretch}{3}
    \begin{tabular}{|l|l|l|}
    \hline
        {\textbf{\makecell[c]{Technology}}} & {\textbf{\makecell[c]{Massive MIMO}}} &{\textbf{\makecell[c]{IRS}}} \\ \hline
        \makecell[c]{Hardware~\cite{8861014}} & \makecell[c]{Several active antennas\\High complexity\\Power-hungry} & \makecell[c]{Massive passive meta-atoms\\Low cost\\Energy-efficiency}\\ \hline
        \makecell[c]{System capacity~\cite{ning2021terahertz}} & \makecell[c]{Logarithmic relationship\\with average transmit power} & \makecell[c]{Linear relationship\\with average transmit power}\\ \hline
         \makecell[c]{Operating\\mechanism~\cite{hu2017potential}} & \makecell[c]{Receive and transmit signal} & \makecell[c]{Reflect signal} \\ \hline
         \makecell[c]{Full-duplex\\communication} & \makecell[c]{RF link required} & \makecell[c]{No RF link required}\\ \hline
         \makecell[c]{Beamforming~\cite{ning2021terahertz}} & \makecell[c]{Fully digital or\\fully analog or\\hybird} & \makecell[c]{Passive and smart}\\ \hline
         \makecell[c]{Power Budget} & \makecell[c]{Very high} & \makecell[c]{Negligible}\\ \hline    
    \end{tabular}
    \label{Table: 10.1}
\end{table}

In support of ubiquitous connectivity, IRS acts in the same role as the active relay in wireless networks.
In particular, IRS is expected to settle the communication coverage issues~\cite{9217160} in high-frequency bands, e.g., mmWave massive MIMO systems.
Therefore, IRS can provide a new paradigm for future 6G ubiquitous connectivity. Naturally, deployments and configuration of IRSs are critical for reaping their full passive beamforming gains.
Existing studies assume the rich-scattering environment between BS and IRS \cite{han2019large}, but the low-rank BS-IRS channel should be considered when it comes to mmWave transmissions.
The sparse multipath channel between the BS and IRS severely limits the degrees of freedom (DoF) of the cascaded channel provided by IRSs. 
The other pressing challenge arising from incorporating IRSs into existing MIMO systems is overcoming the adverse effect of practically imperfect, typically outdated, channel state information (CSI).
Most existing resource allocation algorithms designed for IRS-assisted MIMO systems require the perfect and instantaneous CSI (I-CSI) to perform slot-by-slot IRS optimization, or they can hardly reap the promising gains of the IRSs.
However, slot-by-slot IRS optimization can result in non-negligible overhead and computational complexity in the phase-shift configurations of IRSs~\cite{9806300}.
In general, the phase-shift configurations of the IRSs are closely coupled with MIMO beamformers in IRS-assisted MIMO systems.

In the face of the challenges of IRS deployments and configuration overheads, it is of practical interest to investigate the channel rank improvement and low-overhead IRS reflection configuration schemes in the massive MIMO networks.
The main contributions are listed as follows.
\begin{itemize}
    \item We propose a channel rank improvement method based on a distributed IRS deployment scheme in the mmWave massive MIMO system. By analyzing the DoF of combined channels, we prove the feasibility of a distributed IRS-assisted channel rank improvement scheme. The optimization of each IRS reflection pattern can achieve signal coverage enhancement.
    \item We propose an alternating optimization (AO) framework under the perfect CSI scenario for slot-by-slot sum-rate maximization. In particular, we jointly optimize the active precoding at the BS and the passive reflection pattern of the IRS. Based on the equivalent transformation, we decouple the original optimization problem into two subproblems. For the active precoding problem, we develop a Lagrange multiplier method.  For the reflection optimization problem, we propose a closed-form semi-definite programming (SDP)  method and a low-complexity manifold optimization method to solve the IRS reflection pattern design.
    \item We establish a two-timescale reflection design framework for the imperfect and outdated CSI (O-CSI) scenario, where slot-by-slot optimization overhead is significantly reduced. For the small timescale, we exploit the inherent temporal correlation of the channel between consecutive time slots to suppress the performance degradation incurred by O-CSI. For the large timescale, a modified particle swarm optimization (PSO) method, namely, recursive sampling PSO (rsPSO), is proposed to solve the IRS reflection optimization with reduced complexity.
\end{itemize}
	
The remainder of this chapter is outlined as follows. Section \ref{sec:2} discusses the related studies of IRS-assisted MIMO systems. Section \ref{sec:3} describes the hardware architecture and work principles of IRS. In Section \ref{sec:4}, we present the channel model and distributed-IRS system model, and analyze the channel rank improvement.
In Section \ref{sec:5}, we formulate a sum-rate maximization problem and propose an AO framework to configure the IRS reflection pattern slot-by-slot.
Section \ref{sec:6} proposes a two-timescale IRS reflection pattern design strategy, where the rsPSO-based IRS configuration algorithm at a large timescale is developed. Section \ref{sec:7} provides simulation results and corresponding discussions. Sections \ref{sec:8} and \ref{sec:9} draw conclusions and identify some future works.

\section{Related Work}\label{sec:2}
Recent advancement in IRS shows the vast perspectives across signal modulation~\cite{8928065}, relay transmission~\cite{8811733, 9519722}, target sensing~\cite{9427098}, over-the-air (OTA) analog computation~\cite{9985456}, as illustrated in Fig.~\ref{fig:2}. These new functions depend strongly on the proper IRS reflection pattern optimization.

Existing IRS studies have concentrated on reflection optimization to improve spectral efficiency (SE) and energy efficiency (EE) separately in most scenarios, as summarized in Table~\ref{Table: 10.2}.
The authors of~\cite{han2019large} investigate an IRS-assisted downlink multi-input single-output (MISO) communication system, which shows that the upper bound of the ergodic SE is only proportional to the numbers of IRS atoms and  BS antennas but independent of to the phase-shift values in the case of Rayleigh fading. In~\cite{8811733, huang2019reconfigurable}, the IRS-assisted multi-user MISO network is considered to guarantee the individual QoS of each mobile user (MU). The optimized IRS reflection patterns can attain the same sum-rate performance while improving the EE by up to three-fold compared to relay-aided systems. Most IRS-assisted systems assume that the propagation channel between BS and IRS is rich-scattering~\cite{han2019large, huang2019reconfigurable}. However, for the mmWave massive MIMO scenario, sparsity and low rank are the key features of channels, thus leading to a low-rank cascaded channel matrix and small DoF. Considering the demand for multi-stream transmissions in mmWave communication networks, additional reflection paths are introduced by deploying IRSs to maximize the received signal power~\cite{wang2020intelligent}, minimize system energy consumption~\cite{9707727}, resist jamming attacks~\cite{10021680}, or strengthen the integrity of OTA federated learning~\cite{9829190}. 
Moreover, IRSs have also been integrated into unmanned aerial vehicles (UAVs) to provide uninterrupted air-to-ground connections~\cite{9801642,10002850,10114467}, provide radio surveillance~\cite{10187161}, or combat adversarial jamming attacks~\cite{10146001}.

High-accuracy channel estimation is vital to reaping the promising gains of the IRS.
However, it is rather challenging to estimate the IRS-involved channels because the passive IRS meta-atoms do not involve active RF chains. Although IRS-related cascaded channel estimations have been widely studied in \cite{9087848, 9130088, 9133156}, CSI is always imperfect due to channel estimation errors. In~\cite{zhou2020robust}, the authors focused on robust transmissions under imperfect CSI to satisfy each user's QoS. The above-mentioned works are optimized for the reflection pattern slot by slot. 
To reduce the channel training overhead, a two-timescale transmission protocol is proposed in ~\cite{zhao2020intelligent}, where the sum-rate can be maximized by optimizing IRS and precoder with statistical CSI (S-CSI) and I-CSI. For the larger Rician factor, an S-CSI-based scheme becomes more effective because of more deterministic channels. Nevertheless, the I-CSI is generally dramatically and quickly outdated in time-variant channels. Existing studies generally overlook the temporal correlation between channel slots for IRS design.

\begin{figure}[t!]
    \centering
    \begin{minipage}[t]{1.0\linewidth}
    \centering
        \begin{tabular}{@{\extracolsep{\fill}}c@{}c@{}@{\extracolsep{\fill}}}
            \includegraphics[width=0.4\linewidth]{./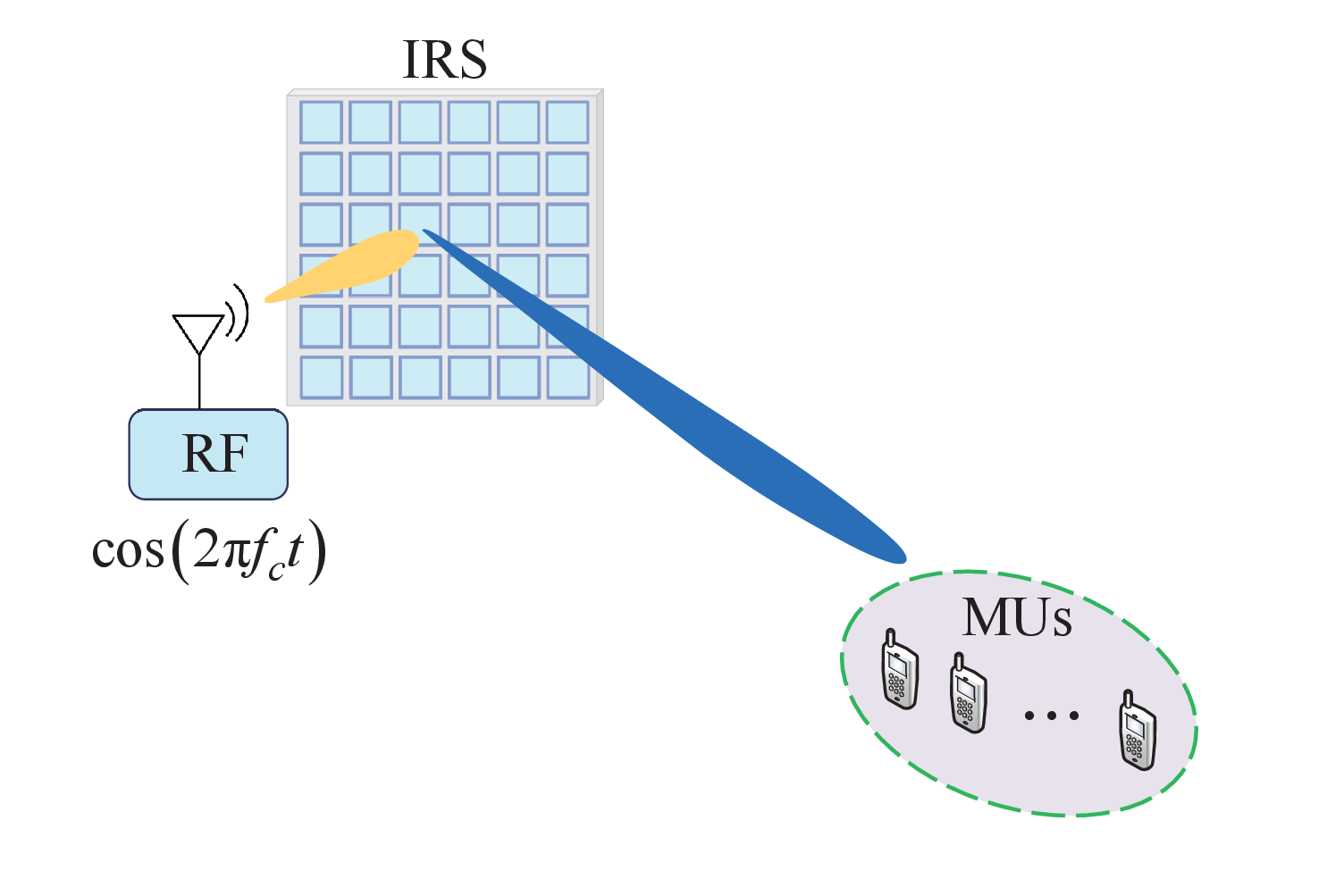} &
            \includegraphics[width=0.4\linewidth]{./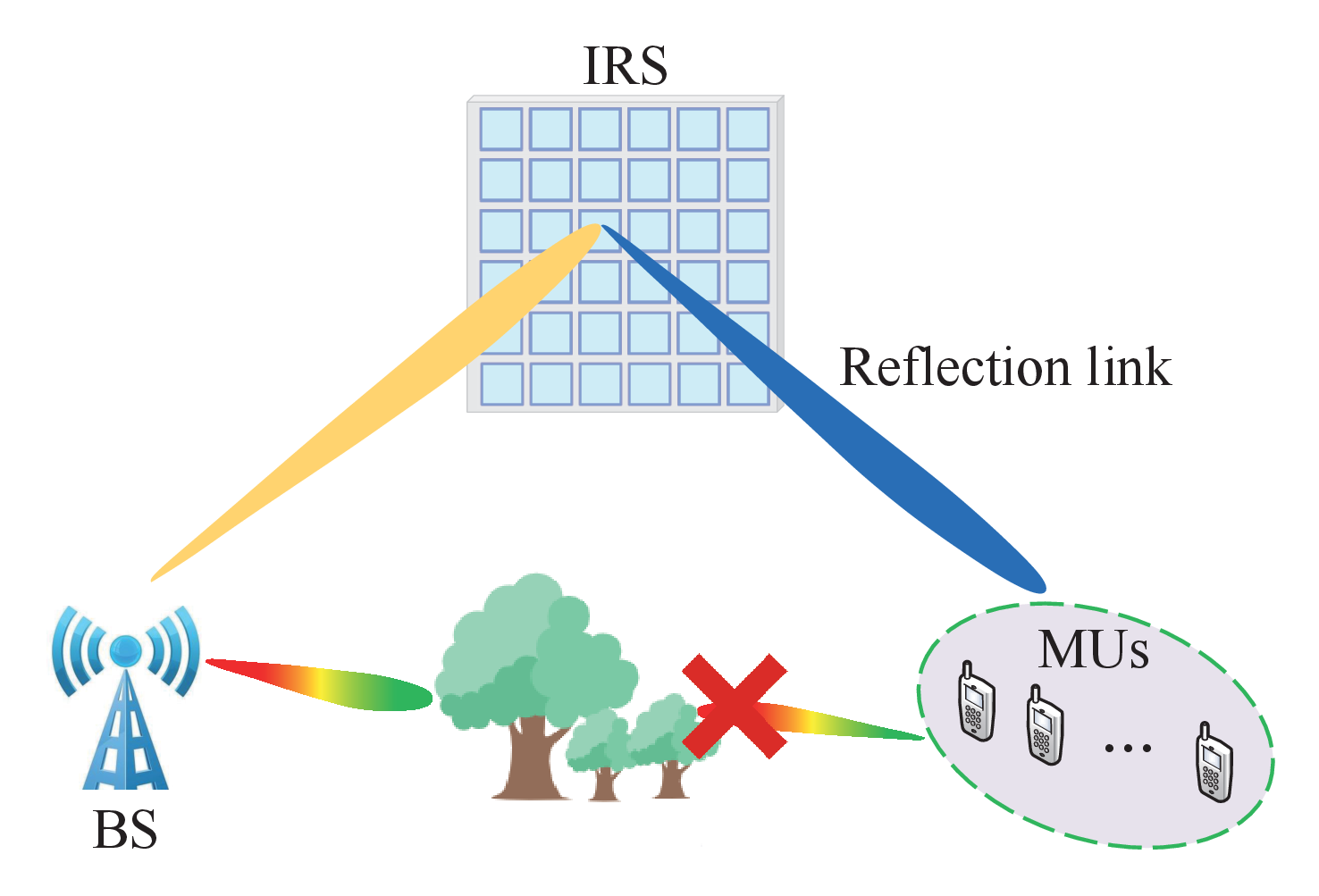}\\
            (a) Signal modulation. & (b) Relay transmission.\\
        \end{tabular}
    \end{minipage}
    \begin{minipage}[t]{1.0\linewidth}
    \centering
        \begin{tabular}{@{\extracolsep{\fill}}c@{}c@{}@{\extracolsep{\fill}}}
            \includegraphics[width=0.4\linewidth]{./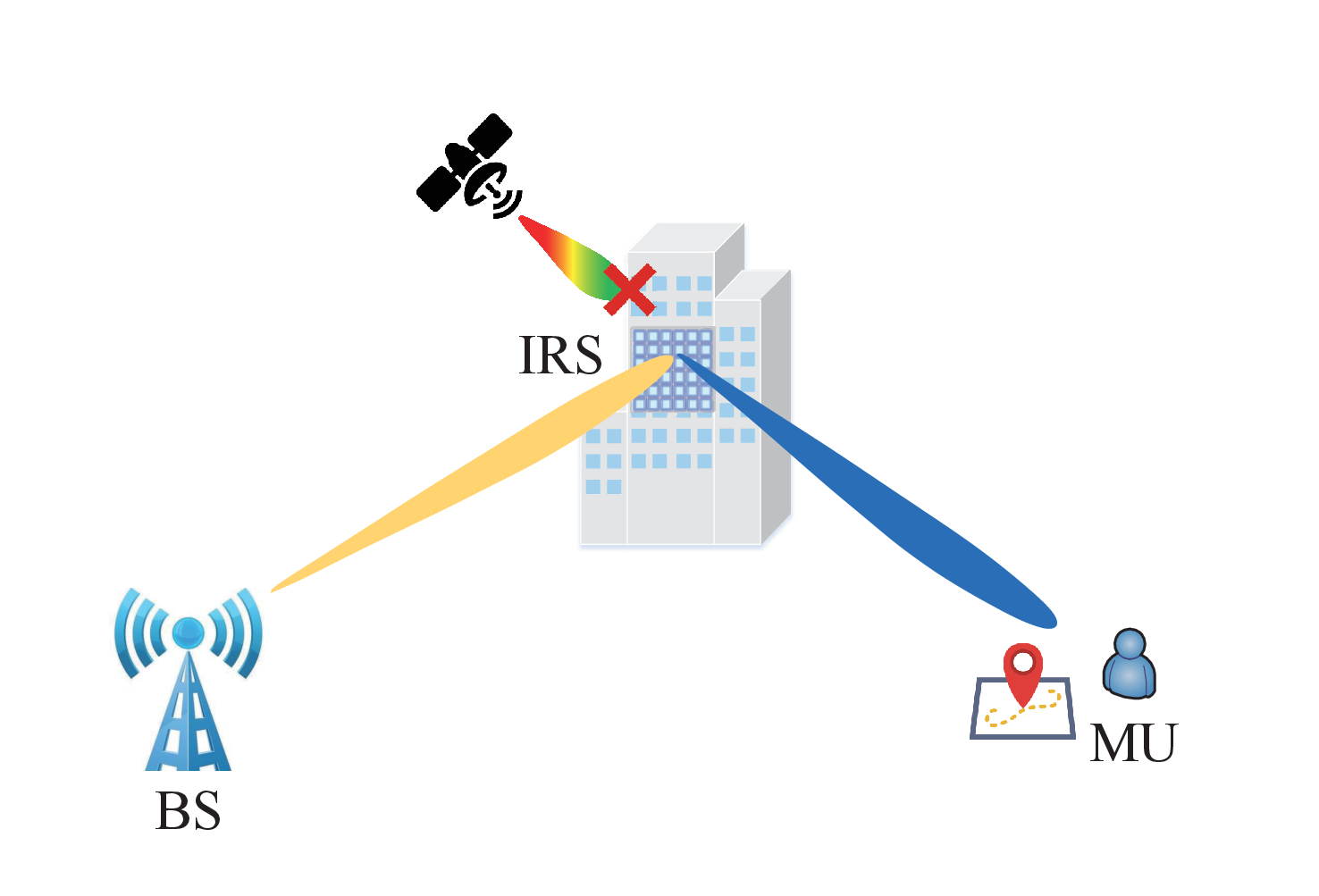} &
            \includegraphics[width=0.4\linewidth]{./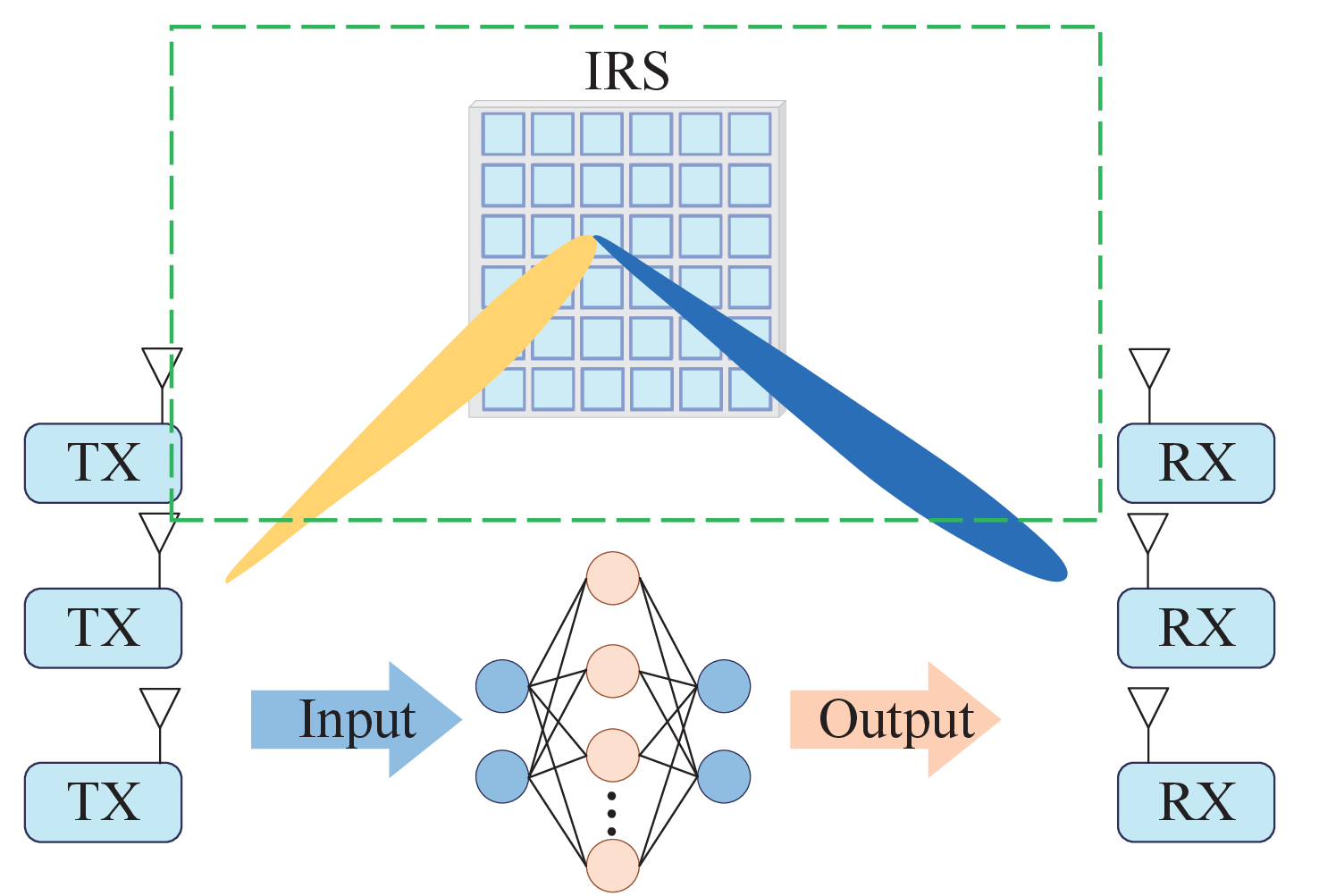}\\
            (c) Target sensing. & (d) OTA analog computation.\\
        \end{tabular}
    \end{minipage}
    \caption{IRS-enabled new functions: (a) modulate the pure carrier and transmit information symbols by IRS reflection; (b) overcome the signal blindspot to ensure robust transmissions; (c) IRS performs beamforming and provides desirable RF propagation properties for sensing; (d) realize the specified computation tasks by controlling the wave propagation.}
    \label{fig:2}
 \end{figure}


\begin{table}[!ht]
    \centering
    \caption{\centering{Literature review of related works}}
    \scalebox{1}{
    \begin{tabular}{|l|l|l|l|l|l|}
    \hline
        \textbf{Ref.} & \textbf{\makecell[c]{Scenario}} & \textbf{\makecell[c]{Optimization\\objectives}} & \textbf{\makecell[c]{Reflection\\pattern\\configuration\\scheme}} & \textbf{\makecell[c]{Role of IRS}} & \textbf{\makecell[c]{CSI}} \\ \hline
        \makecell[c]{~\cite{han2019large}} & \makecell[c]{Single-user \\ MISO} & \makecell[c]{Maximize\\ergodic SE} & \makecell[c]{Slot-by-slot} & \makecell[c]{Coverage \\ enhancement} & \makecell[c]{Perfect\\I-CSI}\\ \hline
        \makecell[c]{~\cite{8811733}} & \makecell[c]{Multi-user\\MISO} & \makecell[c]{Maximize\\transmit power} & \makecell[c]{Slot-by-slot}& \makecell[c]{Hardware cost \\ reduction} & \makecell[c]{Perfect\\I-CSI} \\ \hline
        \makecell[c]{~\cite{huang2019reconfigurable}} & \makecell[c]{Multi-user\\MISO}  & \makecell[c]{Maximize\\EE} & \makecell[c]{Slot-by-slot} & \makecell[c]{Coverage \\ enhancement} & \makecell[c]{Perfect\\I-CSI} \\ \hline
        \makecell[c]{~\cite{wang2020intelligent}} & \makecell[c]{mmWave \\MISO} & \makecell[c]{Maximize\\received power} & \makecell[c]{Slot-by-slot} & \makecell[c]{Channel rank\\improvement} & \makecell[c]{Perfect\\I-CSI} \\ \hline
         \makecell[c]{~\cite{zhou2020robust}} & \makecell[c]{Multi-user\\MISO} & \makecell[c]{Minimize\\transmit power} & \makecell[c]{Slot-by-slot} & \makecell[c]{Save power \\ consumption} & \makecell[c]{Imperfect\\CSI} \\ \hline
          \makecell[c]{~\cite{zhao2020intelligent}} & \makecell[c]{Multi-user\\MISO} & \makecell[c]{Maximize\\expected SE} & \makecell[c]{Two-timescale} & \makecell[c]{Improve\\channel quality} & \makecell[c]{Perfect\\S-CSI\\and I-CSI} \\ \hline
    \end{tabular}}
    \label{Table: 10.2}
\end{table}

\section{Intelligent Reflecting Surfaces Hardware and Architectures}\label{sec:3}
In this section, we first briefly describe the reflection pattern model of IRS and provide the corresponding work principle. Then, the reconfigurable impedance network is modeled to explain how IRS achieves spatially varying phase response.

To elaborately describe the array response model of passive IRS, let us first consider the two-dimensional (2D) array architecture in the three-dimensional (3D) Cartesian coordinate system as Fig. \ref{fig:3}. 
Consider a rectangular IRS of $N$ meta-atoms placed on the $xoy$-plane, with $N_x$ meta-atoms arranged along the $x$-axis and $N_y$ meta-atoms arranged along the $y$-axis. Taking the origin $O$ as the reference center, we define the vector pointing from $O$ to the ($x, y$)-th meta-atom as $\mathbf{b}_{x,y}$ with $x \in \{1,2,\cdots, N_x\}$ and $y \in \{1,2,\cdots, N_y\}$. The unit vector of the $l$-th incident path is denoted by $\mathbf{e}_l$. Then, the phase of the signal incident upon the ($x, y$)-th meta-atom is given by
\begin{align}
\varphi_{x,y}^{\mathrm{i}} = \frac{2\pi \mathbf{b}_{x,y}^{\mathsf{T}} \mathbf{e}_l}{\lambda},
\label{eq:1}
\end{align}
where $\lambda$ is the carrier wavelength. Likely, if $\mathbf{e}_l$ denotes the $l$-th departure path, the phase of the signal departure from the ($x, y$)-th meta-atom is
\begin{align}
\varphi_{x,y}^{\mathrm{d}} = \frac{2\pi \mathbf{b}_{x,y}^{\mathsf{T}} \mathbf{e}_l}{\lambda}.
\label{eq:2}
\end{align}

\begin{figure}[t]
	\centering{}\includegraphics[width=3.5in]{./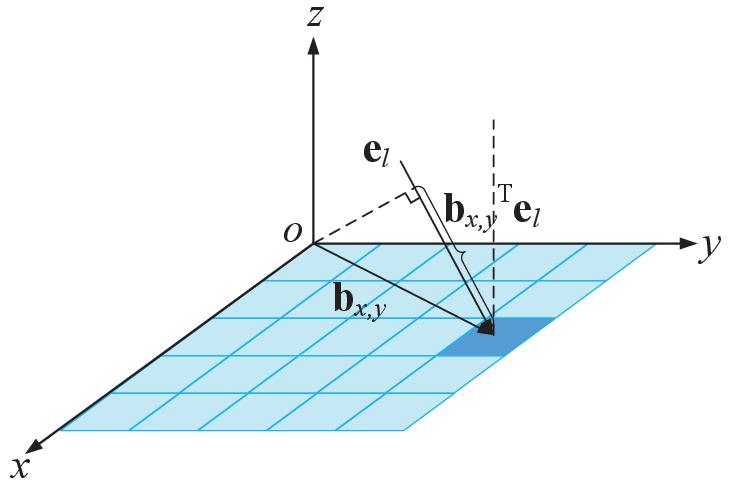}
	\caption{Array geometry structure of IRS.}
	\label{fig:3}
\end{figure}

\par Accordingly, the array response vector of the $l$-th incident/departure path can be written as
\begin{align}
\mathbf{a}_{\mathrm{i}/\mathrm{d}}(\mathbf{e}_l)  =  \exp &\Big( -\mathrm{j} 
\big[ 
\varphi_{1,1}^{\mathrm{i}/\mathrm{d}}, \varphi_{2,1}^{\mathrm{i}/\mathrm{d}}, \cdots, \varphi_{N_x,1}^{\mathrm{i}/\mathrm{d}}, 
\varphi_{1,2}^{\mathrm{i}/\mathrm{d}}, \varphi_{2,2}^{\mathrm{i}/\mathrm{d}}, \cdots, \varphi_{N_x,2}^{\mathrm{i}/\mathrm{d}}, \nonumber\\ 
&\cdots, 
\varphi_{1,N_y}^{\mathrm{i}/\mathrm{d}},\varphi_{2,N_y}^{\mathrm{i}/\mathrm{d}}, \cdots, \varphi_{N_x,N_y}^{\mathrm{i}/\mathrm{d}}
\big]^{\mathsf T} \Big).
\label{eq:3}
\end{align}

\begin{figure}[h]
    \centering{}\includegraphics[scale=0.75]{./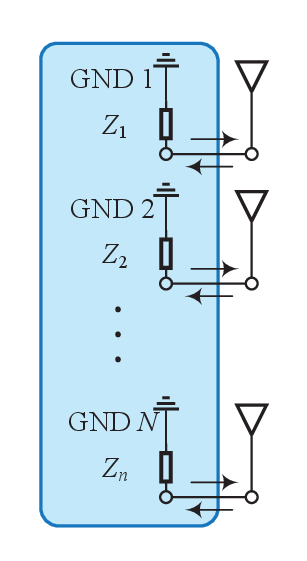}
	\caption{$N$-port single-connected reconfigurable impedance network, which explains the physical significance of the IRS reflection pattern.}
	\label{fig:4}
\end{figure}

IRS can provide the functionality of spatially varying phase response. To explain this physical mechanism conveniently, we model the IRS by a single-connected reconfigurable impedance network, as shown in Fig. \ref{fig:4}. There are $N$ meta-atom connected to the $N$-port reconfigurable impedance network.  Since each port is single-connected without connecting to the other ports, each port connects to the ground through the independently reconfigurable and passive impedance. Consequently, each port can generate an independent phase shift, thus leading to the diagonal scattering matrix $\mathbf{\Phi}$~\cite{9514409}:

\begin{align}
\left[ \mathbf{\Phi} \right]_{n,n} = \frac{Z_{n} - Z_0}{Z_{n} + Z_0}=\frac{\mathrm{j} X_{n} - Z_0}{\mathrm{j} X_{n} + Z_0} = \mathrm{e}^{\mathrm{j} \theta_n}, \quad  \forall n=1,2,\cdots, N,
\label{eq:4}
\end{align}
where $\theta_n$ is the phase response of the $n$-th meta-atom; $\mathbf{\Phi}=\mathrm{diag} \left( [\mathrm{e}^{\theta_1}, \mathrm{e}^{\theta_2}, \cdots, \mathrm{e}^{\theta_N}] \right)$ is a diagonal phase-shift matrix (reflection pattern) of IRS; $Z_n$ and $X_n$ are the reconfigurable impedance and reactance of the $n$-th meta-atom, respectively; and $Z_0$ is the reference impedance.

\begin{theo} \label{theo:1.3.1}
Given the angle $\varphi_{n}^{\mathrm{i}/\mathrm{d}}$ which is the $n$-th entry of vector $\mathbf{a}_{\mathrm{i}/\mathrm{d}}$, the array gain of the IRS from the incident wave to the departure wave is equivalent to:
\begin{align}
G_a(\mathbf{\Phi}) &= \left\vert\mathbf{a}_{\mathrm{i}}^{\mathsf H}\mathbf{\Phi}\mathbf{a}_{\mathrm{d}} \right\vert^2 = \left\vert \sum_{n=1}^{N} \mathrm{e}^{ \mathrm{j} \left( \varphi_{n}^{\mathrm{i}} + \theta_n - \varphi_{n}^{\mathrm{d} }  \right) } \right\vert^2.
\label{eq:5}
\end{align}
The maximum array gain can be achieved by setting
\begin{align}
\theta_n = \varphi_{n}^{\mathrm{d}} - \varphi_{n}^{\mathrm{i}}.
\label{eq:6}
\end{align}
\end{theo}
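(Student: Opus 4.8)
The plan is to prove the statement in two stages: first establishing the algebraic identity (the second equality in \eqref{eq:5}), and then verifying the optimality of the phase choice \eqref{eq:6} by a triangle-inequality argument.

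First I would unpack the quadratic form $\mathbf{a}_{\mathrm{i}}^{\mathsf H}\mathbf{\Phi}\mathbf{a}_{\mathrm{d}}$ using the component definitions already fixed in the excerpt. By \eqref{eq:3}, the $n$-th entries of $\mathbf{a}_{\mathrm{i}}$ and $\mathbf{a}_{\mathrm{d}}$ are $\mathrm{e}^{-\mathrm{j}\varphi_n^{\mathrm{i}}}$ and $\mathrm{e}^{-\mathrm{j}\varphi_n^{\mathrm{d}}}$, so the Hermitian conjugate $\mathbf{a}_{\mathrm{i}}^{\mathsf H}$ has $n$-th entry $\mathrm{e}^{+\mathrm{j}\varphi_n^{\mathrm{i}}}$. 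Since $\mathbf{\Phi}$ is diagonal with $[\mathbf{\Phi}]_{n,n}=\mathrm{e}^{\mathrm{j}\theta_n}$ by \eqref{eq:4}, the matrix product collapses to a single sum over the diagonal, $\sum_{n=1}^N \mathrm{e}^{\mathrm{j}\varphi_n^{\mathrm{i}}}\,\mathrm{e}^{\mathrm{j}\theta_n}\,\mathrm{e}^{-\mathrm{j}\varphi_n^{\mathrm{d}}}$, which is exactly the claimed expression; taking $|\cdot|^2$ then gives the second equality. The only point requiring care is the bookkeeping of the conjugation sign: the minus sign in the exponent of \eqref{eq:3} flips to a plus upon forming $\mathbf{a}_{\mathrm{i}}^{\mathsf H}$, and this is precisely what produces the $+\varphi_n^{\mathrm{i}}$ contribution in the combined phase.

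For the optimization I would observe that each summand $\mathrm{e}^{\mathrm{j}(\varphi_n^{\mathrm{i}}+\theta_n-\varphi_n^{\mathrm{d}})}$ has unit modulus regardless of $\theta_n$. Hence by the triangle inequality
\[
\left|\sum_{n=1}^N \mathrm{e}^{\mathrm{j}(\varphi_n^{\mathrm{i}}+\theta_n-\varphi_n^{\mathrm{d}})}\right| \le \sum_{n=1}^N \left|\mathrm{e}^{\mathrm{j}(\varphi_n^{\mathrm{i}}+\theta_n-\varphi_n^{\mathrm{d}})}\right| = N,
\]
so $G_a(\mathbf{\Phi})\le N^2$ for every admissible reflection pattern. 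Equality holds precisely when all the phasors are aligned, i.e.\ share a common phase; the simplest such choice drives every exponent to zero, which is exactly the prescription $\theta_n=\varphi_n^{\mathrm{d}}-\varphi_n^{\mathrm{i}}$ of \eqref{eq:6}. Substituting this back makes each summand equal to $1$, so the sum equals $N$ and $G_a=N^2$, attaining the bound.

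Since no genuine obstacle arises, the \emph{hard part} here is only presentational: one must make explicit that the unit-modulus constraint $\lvert[\mathbf{\Phi}]_{n,n}\rvert=1$ -- guaranteed by the lossless passive impedance network in \eqref{eq:4} -- is what pins each phasor to the unit circle, and thus what makes the triangle-inequality bound tight independently of the incidence/departure geometry. I would also note that the maximizer is unique only up to a global phase offset common to all $\theta_n$, which leaves $G_a$ unchanged.
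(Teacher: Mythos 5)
Your proof is correct and follows essentially the same route as the paper's: both reduce the problem to a sum of unit-modulus phasors and invoke the triangle inequality, with equality exactly when the phases are aligned, yielding $\theta_n=\varphi_n^{\mathrm{d}}-\varphi_n^{\mathrm{i}}$. Your write-up is in fact slightly more complete, since you also verify the expansion $\mathbf{a}_{\mathrm{i}}^{\mathsf H}\mathbf{\Phi}\mathbf{a}_{\mathrm{d}}=\sum_{n}\mathrm{e}^{\mathrm{j}(\varphi_n^{\mathrm{i}}+\theta_n-\varphi_n^{\mathrm{d}})}$ explicitly (which the paper takes as given) and you state the alignment condition with the correct sign, whereas the paper's phrase ``$\theta_n=\arg(\varphi'_n)$'' contains a harmless sign slip.
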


\begin{proof}
The array gain in \eqref{eq:5} can be reframed as
\begin{align}
G_a(\mathbf{\Phi}) &= \left\vert \sum_{n=1}^{N} \mathrm{e}^{ \mathrm{j} \left( \varphi_{n}^{\mathrm{i}}  - \varphi_{n}^{\mathrm{d} }  \right) } \mathrm{e}^{ \mathrm{j}\theta_n} \right\vert^2 = \left\vert \boldsymbol{\varphi}^{\mathsf{T}} \boldsymbol{\theta} \right\vert^2,
\label{eq:7}
\end{align}
where $\boldsymbol{\varphi} = [\mathrm{e}^{ \mathrm{j} \left( \varphi_{1}^{\mathrm{i}}  - \varphi_{1}^{\mathrm{d} }  \right)}, \mathrm{e}^{ \mathrm{j} \left( \varphi_{2}^{\mathrm{i}}  - \varphi_{2}^{\mathrm{d} }  \right)}, \cdots, \mathrm{e}^{ \mathrm{j} \left( \varphi_{N}^{\mathrm{i}}  - \varphi_{N}^{\mathrm{d} }  \right)} ]^{\mathsf T}$ and $\boldsymbol{\theta}=[\mathrm{e}^{ \mathrm{j}\theta_1}, \mathrm{e}^{ \mathrm{j}\theta_2}, \cdots, \mathrm{e}^{ \mathrm{j}\theta_N}]^{\mathsf T}$. Defining $\varphi'_n = \mathrm{e}^{ \mathrm{j} \left( \varphi_{n}^{\mathrm{i}}  - \varphi_{n}^{\mathrm{d} }  \right)}$, we note that
\begin{align}
\left\vert \boldsymbol{\varphi}^{\mathsf{T}} \boldsymbol{\theta} \right\vert^2 \le \left\vert \sum_{n=1}^{N} \vert \varphi'_n \vert  \vert \theta_n \vert\right\vert^2,
\label{eq:8}
\end{align}
which is equal to satisfying the condition of $\theta_n = \arg (\varphi'_n) = \varphi_{n}^{\mathrm{d}} - \varphi_{n}^{\mathrm{i}}$.
\end{proof}

\section{Intelligent Reflecting Surfaces and the Path Loss Model}\label{sec:4}

In this section, we first provide the product-distance path loss model for IRS cascaded channels. Then, we build the communication models for the centralized and distributed IRS deployments in the mmWave massive MIMO system, respectively. Finally, we analyze the channel rank improvement by the distributed deployment.\par
IRS is regarded as an emerging and appealing 6G-enabled technology, according to its passive and intelligent reflection ability. The terminology ``passive'' indicates that meta-atoms only reflect the incident waves immediately without any analogy and digital signal processing and time delay. As such, IRS can be regarded as intelligent scatters in the physical space to improve the propagation. Particular for future high-frequency communication systems with severe propagation attenuation and poor diffraction bottlenecks, IRS is significantly compatible with the high-frequency bands (e.g., mmWave and Terahertz) by providing uninterrupted wireless connectivity with low power consumption~\cite{9373634, 10287315}.

Although IRS can bring promising reflection gain in a low-cost manner, the actual cascaded channel fading should be considered. As shown in Fig.~\ref{fig:5}, the distances from the BS to IRS and from the IRS to MU can be denoted as $d_1$ and $d_2$, respectively. The path loss of BS-IRS and IRS-MU links through the $n$-th meta-atom of the IRS follow $\mathbb{E}\left(|g_n|^{2}\right)\propto c_1\frac{d_1}{d_0}^{-\alpha_1}$ and $\mathbb{E}\left(|h_n|^{2}\right)\propto c_2\frac{d_2}{d_0}^{-\alpha_2}$, where $c_1\left(c_2\right)$ and $\alpha_1\left(\alpha_2\right)$ represent the path loss value at reference distance $d_0$ and the path loss exponent, respectively. As a result, the path loss of the IRS cascaded link can be denoted as $P_r\propto\frac{1}{{d_1}^{a_1}{d_2}^{a_2}}$, which is inversely proportional to product-distance instead of the sum-distance. Consequently, the product-distance path loss effect would lead to double fading attenuation~\cite{10243567}, which should be properly compensated by the IRS reflection gains.

\begin{figure}[t]
	\centering{}\includegraphics[width=3.5in]{./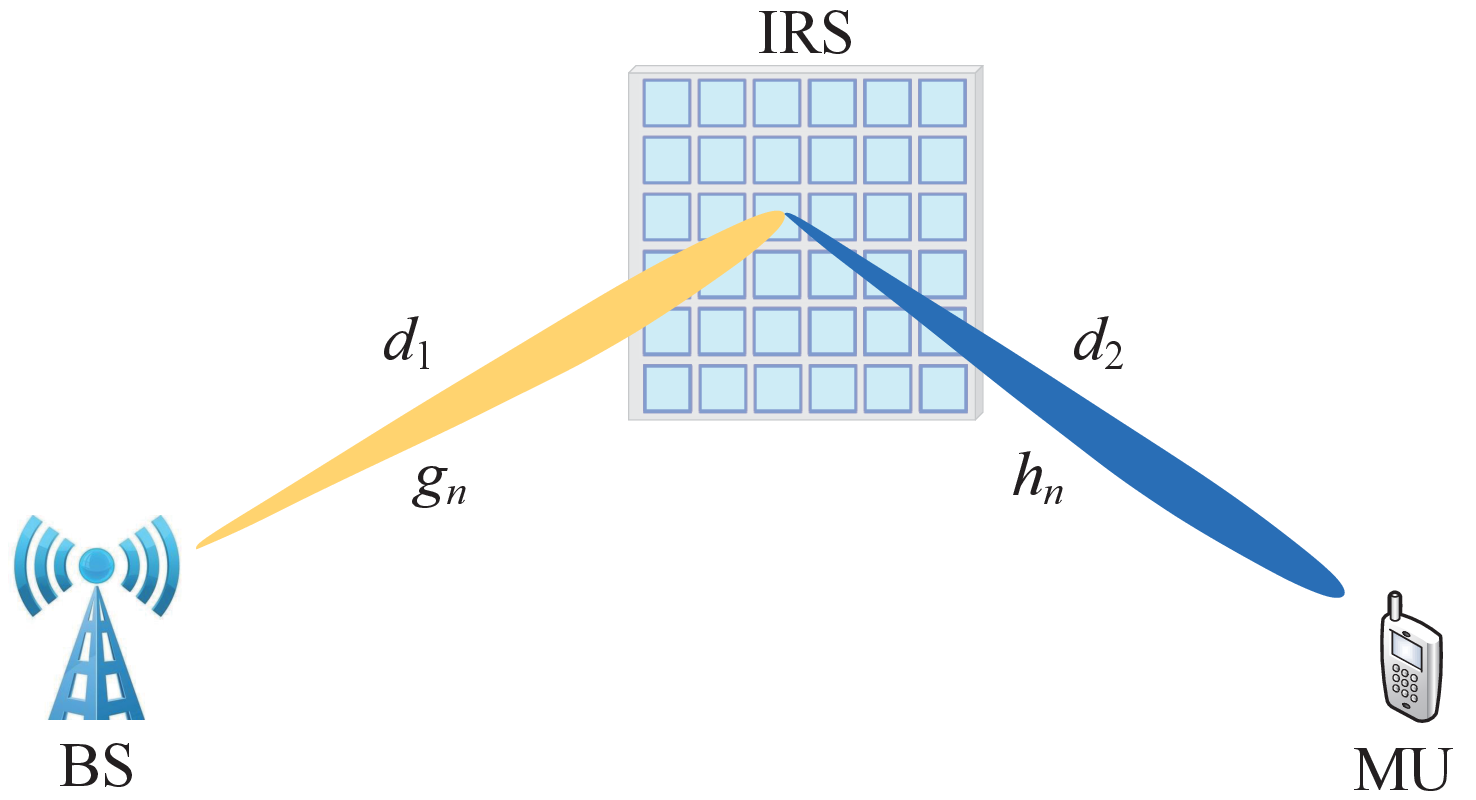}
	\caption{Product-distance path loss model in IRS-assisted mmWave system.}
	\label{fig:5}
\end{figure}
Consider an IRS-assisted downlink mmWave system, as shown in Fig.~\ref{fig:6}. The BS is equipped with a uniform linear array (ULA) composed of $N_{\mathrm{t}}$ antenna elements. With the assistance of the IRS, $K$ single-antenna MUs are served. 
Two deployment schemes of the IRS are provided: centralized and distributed.
In a typical technical scenario of the IRS, the direct mmWave links between BS and MUs are severely blocked by obstacles, while the MUs are located in the hotspot area served by the IRS. Thus, the BS can only communicate with the MUs via the complementary links offered by IRS since the mmWave links between BS and MUs are highly susceptible to environmental blockages and dynamics.

For the centralized deployment scheme, we assume a single IRS is placed near the hotspot. Based on the discussion in Section \ref{sec:3}, we can directly obtain the communication model as:
\begin{align}
\mathbf{y} =& \mathbf{H}^{\mathsf H} \mathbf{\Phi}^{\mathsf H} \mathbf{G} \sum_{k=1}^{K}\mathbf{w}_k {s}_k + \mathbf{n},
\label{eq:9}
\end{align}
where $\mathbf{H}=[\mathbf{h}_1, \mathbf{h}_2, \cdots, \mathbf{h}_K]$ collects the channels of all MUs, $\mathbf{G}$ is the channel between the BS and IRS, $\mathbf{W}=[\mathbf{w}_1, \mathbf{w}_2, \cdots, \mathbf{w}_K]$ is the precoding matrix at the BS, $\mathbf{s}=[{s}_1, {s}_2, \cdots, {s}_K]^{\mathsf{T}}$ collects the intended signals of MUs, and $\mathbf{n}$ is the complex Gaussian noise of $\mathcal{CN}(0,\sigma^2\mathbf{I})$.

\begin{figure}[t]
	\centering{}\includegraphics[width=3.5in]{./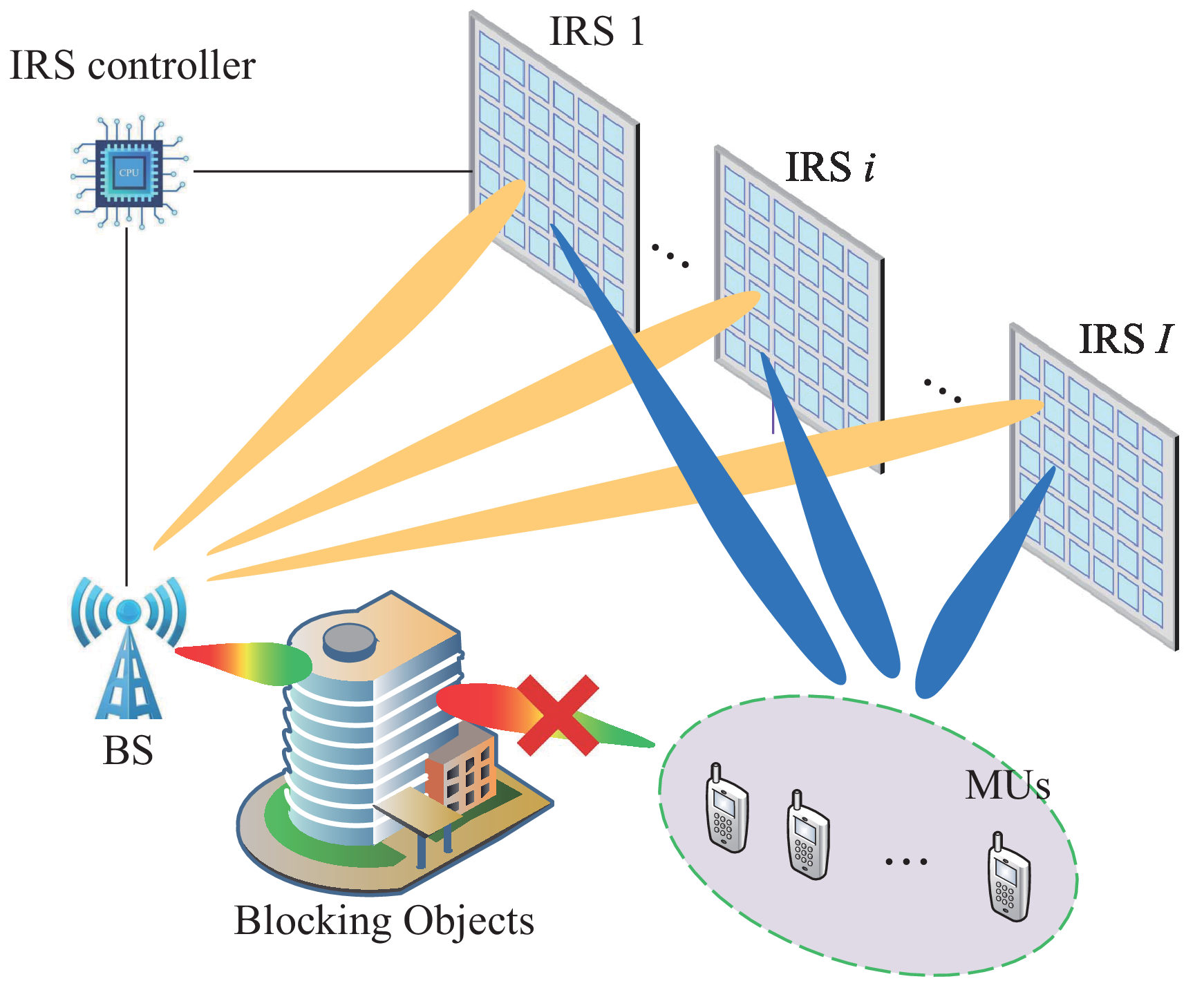}
	\caption{IRS-assisted downlink mmWave massive MIMO system.}
	\label{fig:6}
\end{figure}

According to the widely used Saleh-Valenzuela channel model \cite{6834753} for mmWave communications, $\mathbf G$ can be modeled as
\begin{align}
\mathbf{G} = \sum_{l=0}^{L} \nu^{(l)} \mathbf{a}_{\mathrm{B}} (\mathbf{e}_l)\mathbf{a}_{\mathrm{I}}^{\mathsf{H}} (\mathbf{e}_l), 
\label{eq:10}
\end{align}
where $L$ denotes the number of non-line-of-sight (NLoS) paths, $l=0$ represents the line-of-sight (LoS) path, $\nu^{(l)}$ is the complex gain of the $l$-th path, and $\mathbf{a}_{\rm{B}}$ and $\mathbf{a}_{\rm{I}}$ are the array steering vectors of the BS and IRS.

Typically, the IRS is densely distributed in the hotspot spaces, which gives rise to a high probability of LoS propagation. Due to the severe path loss, the transmit power of 2 or more reflections can be ignored so that only LoS is considered \cite{8811733}.
Similarly, the channel between the IRS and the $k$-th MU is given by
\begin{equation}
\mathbf{h}_{k}= \nu_{k} \mathbf a_{\mathrm{U}}\left( \mathbf{e}_k \right), \forall k=1,2,\cdots, K,
\label{eq:11}
\end{equation}
where $\nu_k$ is the channel gain, and $\mathbf{a}_{\mathrm{U}}$ is the array steering vector of IRS.

However, mmWave channels generally consist of only a small number of dominant multipath components and typically exhibit 3\textasciitilde 5 paths in realistic environments \cite{6847111}, while the scattering at sub-6 GHz is relatively rich.
Therefore, the virtual LoS channel created by the IRS can hardly enjoy the spatially multiplexed benefits. Note that the maximum number of signal streams carried by the virtual LoS channel is determined by the rank of the cascaded channel matrix $\mathbf{\mathbf{H}}^{\mathsf H} \mathbf{\Phi}^{\mathsf H} \mathbf{G}$. When $\mathrm{rank}(\mathbf{G}) < K$, the multi-user communication system does not work.

For this reason, we propose a distributed deployment scheme to achieve channel rank improvement. Specifically, we adopt $I$ IRS units to serve $K$ MUs, and each IRS unit is composed of $N$ meta-atoms. These distributed IRS units are managed by a smart controller, which exchanges CSI and coordinates the reflection pattern for all IRS units.

Then, the received signal at the $k$-th MU can be given by
\begin{align}
\mathbf{y} =& \sum_{i=1}^{I}\mathbf{H}_{i}^{\mathsf H} \mathbf{\Phi}_i^{\mathsf H} \mathbf G_{i} \sum_{k=1}^{K} \mathbf{w}_k {s}_k  + \mathbf{n},
\label{eq:12}
\end{align}
where $\mathbf G_{i}$ is the channel between BS and the $i$-th IRS unit, $\mathbf{H}_{i} =[\mathbf{h}_{i,1}, \mathbf{h}_{i,2}, \cdots, \mathbf{h}_{i,K}]$ is the channel between the $i$-th IRS unit and the $k$-th MU, and the reflection pattern matrix of the $i$-th IRS unit is denoted by $\mathbf{\Phi}_i= \mathrm{diag}( \boldsymbol{\theta}_{i} )$ with $\boldsymbol{\theta}_{i}=[\mathrm{e}^{\theta_{i,1}}, \mathrm{e}^{\theta_{i,2}}, \cdots, \mathrm{e}^{\theta_{i,N}}]^{\mathsf T}$.

We provide the following theorem to describe the feasibility of rank improvement by the distributed deployment.
\begin{theo} \label{theo:1.3.2}
There always exist multiple deployment locations of IRSs which can be carefully chosen to provide the DoF of $K$  by creating cascaded channel matrices $\tilde{\mathbf{H}}_i=\mathbf{H}_{i}^{\mathsf H} \mathbf{\Phi}_i^{\sf H} \mathbf G_{i}$.
\end{theo}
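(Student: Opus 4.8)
The plan is to read ``DoF of $K$'' as the requirement $\mathrm{rank}(\tilde{\mathbf{H}})=K$ for the combined cascaded channel $\tilde{\mathbf{H}}=\sum_{i=1}^{I}\tilde{\mathbf{H}}_i$ appearing in \eqref{eq:12}, since a rank-$K$ effective channel is exactly what permits $K$ interference-free streams for the $K$ MUs. First I would extract the rank-one structure forced by the LoS-dominant mmWave model of \eqref{eq:10}: keeping only the LoS term gives $\mathbf{G}_i=\nu_i\,\mathbf{a}_{\mathrm{I}}(\mathbf{e}_i)\mathbf{a}_{\mathrm{B}}^{\mathsf H}(\mathbf{e}_i)$, and because $\mathbf{\Phi}_i$ is a full-rank (unit-modulus) diagonal matrix, $\mathbf{\Phi}_i^{\mathsf H}\mathbf{G}_i$ is still rank one. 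Hence each per-unit term factors as
\begin{align}
\tilde{\mathbf{H}}_i=\mathbf{H}_i^{\mathsf H}\mathbf{\Phi}_i^{\mathsf H}\mathbf{G}_i=\mathbf{c}_i\,\mathbf{v}_i^{\mathsf H},\quad \mathbf{c}_i=\nu_i\,\mathbf{H}_i^{\mathsf H}\mathbf{\Phi}_i^{\mathsf H}\mathbf{a}_{\mathrm{I}}(\mathbf{e}_i)\in\mathbb{C}^{K},\quad \mathbf{v}_i=\mathbf{a}_{\mathrm{B}}(\mathbf{e}_i)\in\mathbb{C}^{N_{\mathrm{t}}}.
\end{align}
This already explains the obstruction for a single centralized IRS: one rank-one term can never exceed $\mathrm{rank}=1<K$.

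Next I would deploy $I=K$ units (more generally $I\ge K$, with $N_{\mathrm{t}}\ge K$ as is automatic in massive MIMO) and stack the rank-one contributions, writing
\begin{align}
\tilde{\mathbf{H}}=\sum_{i=1}^{K}\mathbf{c}_i\mathbf{v}_i^{\mathsf H}=\mathbf{C}\mathbf{V}^{\mathsf H},\quad \mathbf{C}=[\mathbf{c}_1,\dots,\mathbf{c}_K]\in\mathbb{C}^{K\times K},\quad \mathbf{V}=[\mathbf{v}_1,\dots,\mathbf{v}_K]\in\mathbb{C}^{N_{\mathrm{t}}\times K}.
\end{align}
Since $\mathbf{C}$ is square, $\mathrm{rank}(\tilde{\mathbf{H}})=\mathrm{rank}(\mathbf{C}\mathbf{V}^{\mathsf H})=K$ precisely when $\mathbf{C}$ is invertible and $\mathbf{V}$ has full column rank. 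The theorem thus reduces to exhibiting deployment locations for which $\det\mathbf{C}\neq0$ and $\mathrm{rank}(\mathbf{V})=K$ hold simultaneously.

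The column condition on $\mathbf{V}$ is the easy half: the $\mathbf{v}_i=\mathbf{a}_{\mathrm{B}}(\mathbf{e}_i)$ are array steering vectors of a ULA, and by the standard Vandermonde argument steering vectors at pairwise distinct angles are linearly independent, so it suffices to place the $K$ units so that their angles of departure as seen from the BS are distinct. For $\mathbf{C}$, I would note that each entry $[\mathbf{c}_i]_k=\nu_i\nu_{i,k}^{*}\,\mathbf{a}_{\mathrm{U}}^{\mathsf H}(\mathbf{e}_{i,k})\mathbf{\Phi}_i^{\mathsf H}\mathbf{a}_{\mathrm{I}}(\mathbf{e}_i)$ depends analytically on the position of unit $i$ (through the gains and the IRS-side angles) and on the tunable phases $\theta_{i,n}$. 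A constructive witness is to match the pattern of unit $i$ to user $i$ as in Theorem~\ref{theo:1.3.1}, so that $[\mathbf{c}_i]_i$ sums coherently and scales like $N$ while, for distinct user angles, the off-diagonal entries remain $o(N)$; the resulting $\mathbf{C}$ is diagonally dominant, hence invertible. Since $\det\mathbf{C}$ is then a non-identically-zero analytic function of the locations, it vanishes only on a measure-zero set, so almost every placement (intersected with the generic distinct-angle condition for $\mathbf{V}$) yields $\mathrm{rank}(\tilde{\mathbf{H}})=K$, proving that suitable locations always exist.

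The hard part is the invertibility of $\mathbf{C}$, because its columns simultaneously entangle the per-unit IRS-to-user geometry, the BS-to-IRS steering, and the reflection phases; a mere degree-of-freedom count is not conclusive. The cleanest route is the one above: produce one explicit placement-and-phase configuration at which $\mathbf{C}$ is provably nonsingular (via the matched-pattern diagonal-dominance estimate), and then promote this to a generic statement through analyticity. Bounding the off-diagonal terms as $o(N)$ uniformly in the distinct user angles is the single estimate that needs genuine care.
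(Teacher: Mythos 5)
Your proposal is correct and rests on the same skeleton as the paper's proof: write each cascaded matrix as a rank-one outer product, stack the $I$ contributions into a product of two factor matrices, and reduce the claim to both factors having full rank with $I \ge K$. The paper does exactly this in \eqref{eq:13}--\eqref{eq:14}, writing $\tilde{\mathbf{H}}_i=\mathbf{a}_i\mathbf{b}_i^{\mathsf T}$ and finishing with Sylvester's rank inequality, which in your square-$\mathbf{C}$ formulation is the same observation that $\mathrm{rank}(\mathbf{C}\mathbf{V}^{\mathsf H})=K$ iff both factors are full rank. Where you genuinely go further is the treatment of the user-side factor. The paper asserts that both $\mathbf{a}_i$ and $\mathbf{b}_i$ are ``array steering vectors determined by their particular angles,'' so that independent angles give $\mathrm{rank}(\mathbf{A})=\mathrm{rank}(\mathbf{B})=I$; but as your factorization makes explicit, only the BS-side factor $\mathbf{v}_i=\mathbf{a}_{\mathrm{B}}(\mathbf{e}_i)$ is a pure steering vector (so a Vandermonde argument with distinct AoDs settles $\mathbf{V}$), whereas the left factor $\mathbf{c}_i=\nu_i\,\mathbf{H}_i^{\mathsf H}\mathbf{\Phi}_i^{\mathsf H}\mathbf{a}_{\mathrm{I}}(\mathbf{e}_i)$ entangles the IRS--user channels of \eqref{eq:11} with the tunable phases, and its nonsingularity is precisely the step the paper waves through. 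Your matched-pattern witness (via Theorem \ref{theo:1.3.1}) giving diagonal dominance of $\mathbf{C}$, followed by the analyticity/measure-zero promotion to generic placements, closes that gap rigorously; the one estimate you flag, that off-diagonal entries are $o(N)$ for separated user angles, is indeed the crux, and it follows from a Dirichlet-kernel bound of the form $\left\vert \sum_{n=0}^{N-1} \mathrm{e}^{\mathrm{j}n\delta} \right\vert \le 1/\vert\sin(\delta/2)\vert$, which is uniform once the angle separation is bounded away from zero, so dominance holds for $N$ large enough. In short: same decomposition and same conclusion, but your proof of the key full-rank step is a genuine strengthening of the paper's, which asserts it rather than proves it; the cost is that your argument is asymptotic in $N$ (or generic in the placements), while the paper's statement, as written, hides this requirement entirely.
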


\begin{proof}
Let us consider an extreme case for any cascaded channel matrix $\tilde{\mathbf{H}}_i$ of rank one.
Due to the rank-one property, we can decompose $\tilde{\mathbf{H}}_i$ as $\tilde{\mathbf{H}}_i = \mathbf{a}_i \mathbf{b}_i^{\mathsf T}$.
Since array steering vectors $\mathbf{a}_i$ and $\mathbf{b}_i$ are determined by their particular angles, it is easy to find multiple array steering vectors with independent angles, thus leading to
\begin{align}
\sum_{i=1}^{I} \tilde{\mathbf{H}}_i  = \sum_{i=1}^{I} \mathbf{a}_i \mathbf{b}_i^{\mathsf T}
= \begin{bmatrix}
\mathbf{a}_1 & \mathbf{a}_2 & \cdots & \mathbf{a}_I 
\end{bmatrix}  
\begin{bmatrix}
\mathbf{b}_1^{\mathsf T} \\
\mathbf{b}_2^{\mathsf T} \\
\vdots \\
\mathbf{b}_I^{\mathsf T} 
\end{bmatrix}.
\label{eq:13}
\end{align}
For ease of illustration, we define $\mathbf{A}=[\mathbf{a}_1, \mathbf{a}_2, \cdots, \mathbf{a}_I]$ and $\mathbf{B}=[\mathbf{b}_1, \mathbf{b}_2, \cdots, \mathbf{b}_I]$.
If matrices $\mathbf{A}$ and $\mathbf{B}$ are constructed by independent angles, we have $\mathrm{rank}(\mathbf{A}) = I$ and $\mathrm{rank}(\mathbf{B}) = I$.
According to Sylvester's rank inequality, summing these cascaded channel matrices yields
\begin{align}
\mathrm{rank}(\sum_{i=1}^{I} \tilde{\mathbf{H}}_i) = 
\mathrm{rank}(\mathbf{A} \mathbf{B}^{\mathsf T})
\ge \mathrm{rank}(\mathbf{A})+\mathrm{rank}(\mathbf{B})-I
= I.
\label{eq:14}
\end{align}
This indicates that only $I \ge K$ IRSs deployed independently can provide the DoF of $K$ under the rank-one case. For the higher-rank case, it may require a small $I$.
\end{proof}

\section{IRS-empowered Slot Scheduling and Cost-efficient Reflection Optimization}\label{sec:5}
In this section, we study a slot-by-slot IRS refection design scheme to maximize the system sum rate. With the perfect CSI, a joint precoding and reflection pattern design problem is established, where the precoding and reflection pattern are optimized in each time slot. Then, we develop an AO framework to solve the optimization problem. Two optimization algorithms are proposed to solve the reflection pattern design problem. One is the SDR method in a closed-form update manner, and the other is the unit-circle manifold optimization (UCMO) method.
\subsection{Slot-by-slot Reflection Pattern Design}\label{sec:5.1}
To achieve the sum-rate maximization in the considered downlink mmWave network, the precoder at the BS and the reflection pattern of distributed IRSs should be jointly designed. 
The individual rate of the $k$-th MU is given by
\begin{equation}
R_k = \log_2(1+\gamma_k),
\label{eq:15}
\end{equation}
where the signal-to-interface-noise (SINR) of the $k$-th MU is computed by
\begin{equation}
\gamma_k = \frac{\left\vert \sum_{i=1}^{I}\mathbf h_{i,k}^{\mathsf H} \mathbf{\Phi}_i^{\mathsf H} \mathbf G_{i} \mathbf{w}_k  \right\vert^2}{\sum_{j=1,j\neq k}^{K} \left\vert \sum_{i=1}^{I}\mathbf h_{i,k}^{\mathsf H} \mathbf{\Phi}_i^{\mathsf H} \mathbf G_{i}  \mathbf{w}_j \right\vert^2 + \sigma^2}.
\label{eq:16}
\end{equation}
The resulting sum-rate maximization problem is formulated as
\begin{subequations}
\label{eq:17}
\begin{align}
\underset{\mathbf{W}, \{\mathbf{\Phi}\}_{i=1}^{I}}{\max} \quad & f_1(\mathbf{W}, \{\mathbf{\Phi}\}_{i=1}^{I} ) = \sum_{k=1}^{K} \omega_{k} R_{k}  \label{eq:17a}
\\
\mathrm{s.t.} \quad &\mathrm{tr}\left(\mathbf{W} \mathbf{W}^{\mathsf H}\right) \leq P_{\max},  \label{eq:17b}\\
&  \theta_{i,n} \in [0, 2\pi), \quad \forall i,\ \forall n, \label{eq:17c}
\end{align}
\end{subequations}
where the weight $\omega_{k}$ is the required service priority of the $k$-th MU.

It is observed that problem \eqref{eq:17} is a weighted sum-rate maximization problem. Due to the sum-logarithm form and complicated coupling relationship between $\mathbf{W}$ and $\{\mathbf{\Phi}\}_{i=1}^{I}$, it is difficult to optimize the precoder and reflection patterns simultaneously.
To address this issue, we examine the transformation steps for iteratively optimizing $\mathbf{W}$ and $\{\mathbf{\Phi}\}_{i=1}^{I}$.
First, we constitute an equivalent optimization problem to \eqref{eq:17} by introducing auxiliary variables $\boldsymbol \alpha = \left[\alpha_1, \alpha_2,\cdots, \alpha_K \right]^{\mathsf T}$ for SINR. Hence, the objective \eqref{eq:17a} can be transformed as
\begin{align}
f_2(\mathbf{W}, \{\mathbf{\Phi}\}_{i=1}^{I}, \boldsymbol \alpha) =  \underset{\mathbf{W}, \{\mathbf{\Phi}\}_{i=1}^{I}, \boldsymbol \alpha}{\max} \ \frac{1}{\ln 2} \sum_{k=1}^{K} \omega_{k} \ln(1+\alpha_k) -  \omega_{k} \alpha_k +  \frac{\omega_{k}(1+\alpha_k) \gamma_k}{1+\gamma_k}.
\label{eq:18}
\end{align}
The equivalence between \eqref{eq:18} and \eqref{eq:17} is confirmed by calculating the optimal $\alpha_k$ for $f_2$, given fixed $\mathbf{W}$ and $\{\mathbf{\Phi}\}_{i=1}^{I}$.
By setting $\frac{\partial f_2}{\partial {\alpha}_k}=0$, the optimal solution for $\alpha_k$ is ${\alpha}_k^{\mathrm{opt}} = \gamma_k$.
This means that optimizing the $\alpha_k$ equals to optimizing problem \eqref{eq:18}, while optimizing $\mathbf{W}$ and $\{\mathbf{\Phi}\}_{i=1}^{I}$ without need of dealing with the logarithm.

Based on the above discussions, given the fixed $\alpha_k$ in each iteration, only the last term in \eqref{eq:18} is involved in optimizing $\mathbf{W}$ and $\{\mathbf{\Phi}\}_{i=1}^{I}$ due to the SINR function of $\gamma_k$. From \eqref{eq:18}, the joint precoder and reflection pattern optimization problem is recast to
\begin{align}
\underset{\mathbf{W}, \{\mathbf{\Phi}\}_{i=1}^{I}}{\max} \quad & f_3(\mathbf{W}, \{\mathbf{\Phi}\}_{i=1}^{I})=\sum_{k=1}^{K} \frac{\omega_{k}(1+\alpha_k) \gamma_k}{1+\gamma_k}  \label{eq:19}
\\
\mathrm{s.t.} \quad & \eqref{eq:17b}, \eqref{eq:17c}. \nonumber
\end{align}
Now, we transform the joint precoder and reflection pattern optimization problem into a multi-ratio fractional programming (MRFP) form.

To solve the precoding matrix $\mathbf W$ when given fixed IRS reflection patterns $\{\mathbf{\Phi}_i\}_{i=1}^{I}$, we first define $\tilde{\mathbf h}_k^{\mathsf H} = \sum_{i=1}^{I}\mathbf h_{i,k}^{\mathsf H} \mathbf{\Phi}_i^{\mathsf H} \mathbf G_{i}$ for later convenience.
Substituting $\tilde{\mathbf h}_k$ into \eqref{eq:16}, problem \eqref{eq:19} can be rewritten as
\begin{align}
\underset{\mathbf{W}}{\max} \quad & f_4(\mathbf{W}) = \sum_{k=1}^{K} \frac{ \tilde{\omega}_{k} \left\vert  \tilde{\mathbf h}_k^{\mathsf H} \mathbf{w}_k  \right\vert^2}{\sum_{j=1}^{K} \left\vert \tilde{\mathbf h}_k^{\mathsf H}  \mathbf{w}_j \right\vert^2 + \sigma^2}
\label{eq:20}
\\
\mathrm{s.t.} \quad &  \eqref{eq:17b}, \nonumber
\end{align}
where $\tilde{\omega}_{k} = \omega_{k}(1+\alpha_k)$.
Problem \eqref{eq:20} is an MRFP problem. Based on the quadratic transform \cite{8314727}, we reformulate \eqref{eq:20} as
\begin{align}
f_5(\mathbf{W}, \boldsymbol{\beta}) =& \sum_{k=1}^{K}  2\sqrt{\tilde{\omega}_k} \mathsf{Re}\{ \beta_k^{\ast}\tilde{\mathbf h}_k^{\mathsf H} \mathbf{w}_k \} - \vert\beta_k\vert^2 \bigg(\sum_{j=1}^{K} \left\vert \tilde{\mathbf h}_k^{\mathsf H}  \mathbf{w}_j \right\vert^2 + \sigma^2 \bigg)  ,
\label{eq:21}
\end{align}
where $\boldsymbol{\beta} = [\beta_1, \beta_2, \cdots, \beta_K]^{\mathsf T}$ is a auxiliary vector by the quadratic transform. Next, we can directly obtain the optimal $\boldsymbol \beta$ by setting $\partial f_5 / \partial \beta_k = 0$.

Then, we adopt the Lagrange multiplier method to address the power constraint \eqref{eq:17a} to obtain the optimal $\mathbf w_k$, as given by
\begin{align}
{\mathbf w}_k^{\mathrm{opt}} = \sqrt{\tilde{\omega}_k} {\beta}_k \bigg(\mu \mathbf I_N + \sum_{i=1}^{K} \vert \beta_i \vert^2 \tilde{\mathbf h}_i \tilde{\mathbf h}_i^{\mathsf H} \bigg)^{-1} \tilde{\mathbf h}_k, 
\label{eq:22}
\end{align}
where $\mu \ge 0$ is the Lagrange multiplier.



To facilitate the optimization of reflection patterns, we adopt the following identity:
\begin{align}
\sum_{i=1}^{I}\mathbf h_{i,k}^{\mathsf H} \mathbf{\Phi}_i^{\mathsf H}  &= \sum_{i=1}^{I} \bm \theta_i^{\mathsf H} \mathrm{diag}\left( \mathbf h_{i,k}^{\mathsf H} \right) .
\label{eq:23}
\end{align}
Given $\boldsymbol{\alpha}$ and $\mathbf W$, substituting \eqref{eq:23} into \eqref{eq:19} yields
\begin{align}
\underset{\bm{\theta}_i}{\max} \quad & f_6(\bm{\theta}_i) = \sum_{k=1}^{K} \frac{\tilde{\omega}_k \left\vert \sum_{i=1}^{I} \bm \theta_i^{\mathsf H}  \mathbf v_{i,k,k} \right\vert^2}{\sum_{j=1}^{K} \left\vert \sum_{i=1}^{I}\bm \theta_i^{\mathsf H}  \mathbf v_{i,k,j} \right\vert^2 + \sigma^2}
\label{eq:24}
\\
\mathrm{s.t.} \quad & \eqref{eq:17c}, \nonumber
\end{align}
where $\mathbf v_{i,k,j} = \mathrm{diag}\left( \mathbf h_{i,k}^{\mathsf H}\right) \mathbf G_{i} \mathbf w_j$.
Problem \eqref{eq:24} can be further reduced to a compact form, as given by
\begin{align}
\underset{\tilde{\bm{\theta}} }{\max} \quad  f_7(\tilde{\bm{\theta}})
&= \sum_{k=1}^{K} \frac{\tilde{\omega}_k \left\vert {\tilde{\bm\theta}}^{\mathsf H} \tilde{\mathbf v}_{k,k} \right\vert^2}{\sum_{j=1}^{K} \left\vert {\tilde{\bm\theta}}^{\mathsf H} \tilde{\mathbf v}_{k,j} \right\vert^2 + \sigma^2}.
\label{eq:25}
\end{align}
where $\tilde{\bm\theta} = \mathrm{vec} \left(\left[\bm \theta_1, \bm \theta_2, \cdots, \bm \theta_I \right] \right)$ and $\tilde{\mathbf v}_{k,j}=\mathrm{vec}\left(\left[\mathbf v_{1,k,j}, \mathbf v_{2,k,j}, \cdots, \mathbf v_{I,k,j} \right] \right)$. 

Note that problem \eqref{eq:25} is also an MRFP, but with non-convex unit-modulus constraints \eqref{eq:17c}.
To get rid of the fractional form, applying the quadratic transform to \eqref{eq:25} yields
\begin{align}
\underset{\tilde{\bm\theta}, \bm{\rho}}{\max} \quad  f_8(\tilde{\bm\theta}, \bm\rho) &= \sum_{k=1}^{K} 2\sqrt{\tilde{\omega}_k} \mathsf{Re}\left\{ \rho_k^{\ast} \tilde{\bm\theta}^{\sf H} \tilde{\mathbf v}_{k,k} \right\} - \vert \rho_k \vert^2 \bigg(\sum_{j=1}^{K} \left\vert \tilde{\bm\theta}^{\sf H} \tilde{\mathbf v}_{k,j} \right\vert^2 + \sigma^2 \bigg) ,
\label{eq:26}
\end{align}
where $\rho_k$ refers to the auxiliary variable of quadratic transform. Then, the optimal $\rho_k$ can be directly obtained by setting $\frac{\partial f_8}{\rho_k}=0$.

Based on the Lagrange multiplier method, optimizing $\tilde{\bm\theta}$ reduces to
\begin{align}
\underset{\tilde{\bm\theta}}{\max} \quad  f_9(\tilde{\bm\theta}) &= -\tilde{\bm\theta}^{\mathsf H} \mathbf A \tilde{\bm\theta} + 2\mathsf{Re}\{\tilde{\bm\theta}^{\mathsf H} \mathbf b\} - \sum_{k=1}^{K} \vert \rho_k \vert^2 \sigma^2,
\label{eq:27}
\end{align}
where $\mathbf A = \sum_{k=1}^{K} \left( \vert \rho_k \vert^2 \sum_{j=1}^{K} \tilde{\mathbf v}_{k,j}\tilde{\mathbf v}_{k,j}^{\mathsf H}\right)$ and $\mathbf b = \sum_{k=1}^{K} \sqrt{\tilde{\omega}_k} \rho_k^{\ast} \tilde{\mathbf v}_{k,k}$.
Since $\mathbf A$ is a positive-definite matrix, problem \eqref{eq:27} becomes a quadratic constraint quadratic programming (QCQP).
To tackle the unit-modulus \eqref{eq:17c}, we relax it into a convex constraint $\vert\theta_n\vert\le 1, \forall n=1,2,\cdots, N$. Therefore, we rewrite \eqref{eq:27} as
\begin{subequations}
\label{eq:28}
\begin{align}
\underset{\tilde{\bm\theta}}{\max} \quad & f_{10}(\tilde{\bm\theta}) = -\tilde{\bm\theta}^{\mathsf H} \mathbf A \tilde{\bm\theta} + 2\mathsf{Re}\{\tilde{\bm\theta}^{\mathsf H} \mathbf b\}, \label{eq:28a} \\
\mathrm{s.t.} \quad &  \vert {\theta}_{n} \vert^2 \leq 1, \quad \forall n=1,2,\cdots, N, \label{eq:28b}
\end{align}
\end{subequations}
According to Slater's condition, it is easy to check that the strong duality holds. Hence, the Lagrangian of \eqref{eq:28} can be written as
\begin{align}
f_{\mathcal L}(\tilde{\bm\theta}, \bm\zeta) &= f_{10}(\tilde{\bm\theta}) - \sum_{n = 1}^{N} \zeta_{n} \left( \vert\theta_n\vert^2 - 1 \right) ,
\label{eq:29}
\end{align}
where $\zeta_k$ denotes the dual variable for constraints \eqref{eq:28b}. It is noticed that $f_{\mathcal L}(\tilde{\bm\theta}, \bm\zeta)$ is concave with respect to $\tilde{\bm\theta}$. Then, the dual transformation of \eqref{eq:29} is given by
\begin{subequations}
\label{eq:30}
\begin{align}
\underset{ \bm\zeta }{\min} \quad & f_{\mathcal D}(\bm\zeta) = \underset{\tilde{\bm\theta}}{\sup} \quad f_{\mathcal L}(\tilde{\bm\theta}, \bm\zeta), \label{eq:30a}\\
\mathrm{s.t.} \quad &  \zeta_{n} \geq 0, \quad \forall n=1,2,\cdots, N, \label{eq:30b}
\end{align}
\end{subequations}
where $f_{\mathcal D}(\bm\zeta)$ is the Lagrange dual function. To get the dual objective in the form that only depends on $\bm\zeta$, we attain the optimal $\tilde{\bm\theta}$ by setting ${\partial f_{\mathcal L}}/{\partial \tilde{\bm\theta}}=0$. Thus, we have
\begin{align}
\tilde{\bm\theta}^{\mathrm{opt}} &= \Big(\mathbf A + \mathrm{diag}\left(\bm\zeta\right) \Big)^{-1}  \mathbf b = \mathbf D(\bm\zeta) \mathbf b, \label{eq:31}
\end{align}
where $\mathbf D(\bm\zeta) = \left( \mathbf A + \mathrm{diag}\left(\bm\zeta\right) \right)^{-1}$. The optimal reflection patterns require the value of $\bm\zeta$. The following lemma reveals the property of optimal $\bm\zeta$.

\begin{lemm}\label{lemm:1.4.1}
The optimal ${\bm\zeta}$ for problem \eqref{eq:30} can be given by
\begin{align}
\bm\zeta^{\mathrm{opt}} = \left\{\zeta_{n} \geq 0: \left[\mathbf D(\bm\zeta) \mathbf b \right] \circ \left[\mathbf D(\bm\zeta) \mathbf b \right]^{\ast} = \mathbf{1} \right\}, \label{eq:32}
\end{align}
which is exactly equivalent to $\vert\theta_n\vert^2 = 1$, i.e., unit-modulus phase-shift constraints.
\end{lemm}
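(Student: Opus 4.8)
The plan is to solve the dual problem \eqref{eq:30} directly and recognize that its first-order optimality condition is precisely the unit-modulus requirement. Since strong duality holds (already verified via Slater's condition), it suffices to characterize the minimizer of the dual function $f_{\mathcal D}(\bm\zeta)$.

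First I would obtain a closed form for $f_{\mathcal D}$. Substituting the inner maximizer $\tilde{\bm\theta}^{\mathrm{opt}} = \mathbf D(\bm\zeta)\mathbf b$ from \eqref{eq:31} back into the Lagrangian \eqref{eq:29}, and using both the defining identity $\left(\mathbf A + \mathrm{diag}(\bm\zeta)\right)\mathbf D(\bm\zeta) = \mathbf I$ and the Hermitian symmetry of $\mathbf D(\bm\zeta)$, the quadratic and linear terms collapse to a single term $\mathbf b^{\mathsf H}\mathbf D(\bm\zeta)\mathbf b$, yielding
\begin{align}
f_{\mathcal D}(\bm\zeta) = \mathbf b^{\mathsf H}\mathbf D(\bm\zeta)\mathbf b + \sum_{n=1}^{N}\zeta_n. \nonumber
\end{align}
This function is convex in $\bm\zeta$, so its stationary point is its global minimizer.

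Next I would differentiate $f_{\mathcal D}$ coordinate-wise. Using the matrix-inverse identity $\partial \mathbf D(\bm\zeta)/\partial \zeta_n = -\mathbf D(\bm\zeta)\,\mathbf e_n \mathbf e_n^{\mathsf T}\,\mathbf D(\bm\zeta)$, where $\mathbf e_n$ is the $n$-th standard basis vector, and writing $\mathbf v = \mathbf D(\bm\zeta)\mathbf b$ so that $\mathbf b^{\mathsf H}\mathbf D(\bm\zeta) = \mathbf v^{\mathsf H}$, the first term contributes $-\mathbf v^{\mathsf H}\mathbf e_n \mathbf e_n^{\mathsf T}\mathbf v = -\vert v_n\vert^2$ while the additive penalty contributes $+1$. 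Hence $\partial f_{\mathcal D}/\partial \zeta_n = 1 - \vert [\mathbf D(\bm\zeta)\mathbf b]_n\vert^2$. Setting the gradient to zero gives $\vert[\mathbf D(\bm\zeta)\mathbf b]_n\vert^2 = 1$ for every $n$, which stacked over $n$ is exactly the Hadamard-product condition $[\mathbf D(\bm\zeta)\mathbf b]\circ[\mathbf D(\bm\zeta)\mathbf b]^{\ast} = \mathbf 1$ in \eqref{eq:32}; by \eqref{eq:31} this reads $\vert\theta_n^{\mathrm{opt}}\vert^2 = 1$.

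I expect the main obstacle to lie in the boundary analysis rather than the algebra. The dual feasibility constraint is $\zeta_n \ge 0$, so a priori some coordinate could sit at $\zeta_n = 0$ with $\partial f_{\mathcal D}/\partial \zeta_n \ge 0$, i.e. $\vert v_n\vert^2 \le 1$, corresponding to an inactive relaxed constraint $\vert\theta_n\vert^2 \le 1$. To reach the stated characterization I would argue that the relaxation of the unit-modulus constraints is tight, so that every primal constraint is active and, by complementary slackness, every optimal dual variable is strictly positive; interior stationarity then holds at all coordinates. The derivative computation with the complex vector $\mathbf v$ and the matrix-inverse identity is the other place requiring care, but it is routine once the Hermitian structure of $\mathbf D(\bm\zeta)$ is exploited.
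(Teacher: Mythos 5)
Your proposal follows essentially the same route as the paper: substitute the inner maximizer $\tilde{\bm\theta}^{\mathrm{opt}}=\mathbf D(\bm\zeta)\mathbf b$ to obtain the closed-form dual $f_{\mathcal D}(\bm\zeta)=\mathbf b^{\mathsf H}\mathbf D(\bm\zeta)\mathbf b+\sum_n\zeta_n$, differentiate coordinate-wise via the matrix-inverse derivative identity (the paper phrases this as a trace/chain-rule computation, which is algebraically identical to your basis-vector version), and force the gradient to zero to get the Hadamard condition $[\mathbf D(\bm\zeta)\mathbf b]\circ[\mathbf D(\bm\zeta)\mathbf b]^{\ast}=\mathbf 1$. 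The one point where you go beyond the paper is your flagging of the boundary case $\zeta_n=0$ (where only $\partial f_{\mathcal D}/\partial\zeta_n\ge 0$, i.e.\ $\vert\theta_n\vert\le 1$, is guaranteed): the paper's proof silently assumes interior stationarity at every coordinate, so your version is, if anything, the more careful of the two, though your tightness argument is sketched rather than completed.
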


\begin{proof}
According to the chain rule in the matrix derivatives, we have
\begin{align}
\frac{\partial f_{\mathcal D} }{\partial \zeta_n} &= 1-\mathrm{tr}\left[ \mathbf D(\bm\zeta) \mathbf b \mathbf b^{\mathsf H} \mathbf D(\bm\zeta)  \frac{\partial \left( \mathbf A + \mathrm{diag}\left(\bm\zeta\right) \right)}{\partial \zeta_n} \right]  \nonumber \\
&= 1- \left[ \mathbf D(\bm\zeta) \mathbf b \mathbf b^{\mathsf H} \mathbf D(\bm\zeta) \right]_{n,n}. \label{eq:33}
\end{align}
Forcing \eqref{eq:33} to be zero yields
\begin{align}
\left[\mathbf D(\bm\zeta) \mathbf b \right] \odot \left[\mathbf D(\bm\zeta) \mathbf b \right]^{\ast} = \tilde{\bm\theta} \circ \tilde{\bm\theta}^{\mathsf H} = \mathbf 1 . \label{eq:34}
\end{align}
This completes our proof.
\end{proof}

Substituting \eqref{eq:31} into \eqref{eq:30}, we have
\begin{align}
f_{\mathcal D}(\bm\zeta) &= \mathbf b^{\sf H} \mathbf D(\bm\zeta) \mathbf b + \mathrm{tr}\left(\mathrm{diag}\left(\bm\zeta\right) \right). \label{eq:35}
\end{align}
Notably, based on the Schur complement, \eqref{eq:35} can be recast as an SDP form:
\begin{subequations}
\label{eq:36}
\begin{align}
\underset{ \bm\zeta,  \epsilon}{\max} \quad & \epsilon-\mathrm{tr}(\mathrm{diag}\left(\bm\zeta\right)), \\
\mathrm{s.t.} \quad &   
\begin{bmatrix}
\mathbf A + \mathrm{diag}\left(\bm\zeta\right) & \mathbf b \\
\mathbf b^{\mathsf H} & -\epsilon
\end{bmatrix} \succcurlyeq \mathbf 0,
\end{align}
\end{subequations}
which can be efficiently solved by convex tools such as CVX.

The details of the developed AO framework in this subsection are summarized in Algorithm \ref{alg:10.1}.

\begin{algorithm}[htbp]
\caption{The proposed AO framework.}
\label{alg:10.1}
\LinesNumbered
\SetKwInOut{KIN}{Initialization}
\KIN{Set feasible values of $\{\mathbf W^{(0)}, \mathbf \Phi_i^{(0)}\}$ and iteration index $t=0$.}
\Repeat{The function \eqref{eq:18} converges}{
Set $t \leftarrow t+1$\;
Update $\alpha_k^{(t)} = \gamma_k$ by \eqref{eq:16}\;
Update $\mathbf w_k^{(t)}$ by \eqref{eq:22}\;
Update $\bm\zeta^{(t)}$ by solving problem \eqref{eq:36} to obtain $\tilde{\bm\theta}^{(t)}$ by using \eqref{eq:31}\;
With given $\tilde{\bm\theta}^{(t)}$, update $\mathbf \Phi_i^{(t)}$\;
}
\end{algorithm}

We provide the following proposition to further illustrate the convergence of the proposed AO algorithm.

\begin{prop}\label{prop:1.4.1}
The problem in \eqref{eq:18} converges when the AO algorithm is adopted.
\end{prop}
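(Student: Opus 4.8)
The plan is to prove convergence by the classical monotone-plus-bounded argument for block coordinate ascent. The key structural fact, already established just before \eqref{eq:18}, is that $f_2$ is an exact reformulation of the weighted sum rate: for fixed $\mathbf{W}$ and $\{\mathbf{\Phi}\}_{i=1}^{I}$, the choice $\alpha_k^{\mathrm{opt}} = \gamma_k$ maximizes $f_2$ over $\boldsymbol{\alpha}$ and returns $f_2 = f_1 = \sum_{k=1}^{K} \omega_k R_k$. I would therefore read one outer iteration of Algorithm~\ref{alg:10.1} as three successive block maximizations of the single surrogate $f_2$ — over $\boldsymbol{\alpha}$, over $\mathbf{W}$, and over $\{\mathbf{\Phi}\}_{i=1}^{I}$ — and show that none of them decreases $f_2$.

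First I would write out the monotonicity chain across iteration $t \to t+1$. The $\boldsymbol{\alpha}$-update gives
\begin{align}
f_2\big(\mathbf{W}^{(t)}, \{\mathbf{\Phi}\}^{(t)}, \boldsymbol{\alpha}^{(t)}\big) \le f_2\big(\mathbf{W}^{(t)}, \{\mathbf{\Phi}\}^{(t)}, \boldsymbol{\alpha}^{(t+1)}\big), \nonumber
\end{align}
since $\alpha_k^{(t+1)} = \gamma_k$ is the unconstrained maximizer obtained from $\partial f_2 / \partial \alpha_k = 0$. For fixed $\boldsymbol{\alpha}^{(t+1)}$, maximizing $f_2$ over $\mathbf{W}$ reduces to the MRFP $f_4$ in \eqref{eq:20}; here I would invoke the equivalence of the quadratic transform, $f_4(\mathbf{W}) = \max_{\boldsymbol{\beta}} f_5(\mathbf{W}, \boldsymbol{\beta})$, so that taking the closed-form optimal $\boldsymbol{\beta}$ followed by the Lagrange-multiplier solution $\mathbf{w}_k^{\mathrm{opt}}$ in \eqref{eq:22} constitutes a monotone ascent on $f_4$ and hence does not decrease $f_2$. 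The identical argument handles the reflection block: maximizing $f_2$ over $\{\mathbf{\Phi}\}_{i=1}^{I}$ reduces to the MRFP $f_7$ in \eqref{eq:25}, and the quadratic transform $f_8$ together with the dual solution $\tilde{\bm\theta}^{\mathrm{opt}} = \mathbf{D}(\bm\zeta)\mathbf{b}$ in \eqref{eq:31} yields a non-decreasing update. Concatenating the three inequalities gives
\begin{align}
f_2^{(t)} \le f_2^{(t+1)}, \nonumber
\end{align}
so the surrogate is monotonically nondecreasing along the iterations.

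Next I would establish an upper bound. Because the constraint \eqref{eq:17b} caps the transmit power at $P_{\max}$ and the noise variance $\sigma^2 > 0$, every SINR $\gamma_k$ in \eqref{eq:16} is finite, hence each $R_k = \log_2(1+\gamma_k)$ is finite and $f_1 = \sum_{k=1}^{K} \omega_k R_k$ is bounded above. Since at the optimal $\boldsymbol{\alpha}$ the surrogate coincides with $f_1$, the nondecreasing sequence $\{f_2^{(t)}\}$ is bounded above, and the monotone convergence theorem then delivers convergence of the objective.

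The main obstacle is the reflection block, where the unit-modulus constraint \eqref{eq:17c} is relaxed to the convex disk $|\theta_n|^2 \le 1$ in \eqref{eq:28b}. To keep the monotonicity chain meaningful for the \emph{true} $f_2$ — whose $\gamma_k$ are evaluated with unit-modulus $\mathbf{\Phi}_i$ — I must argue the relaxation is tight. By Lemma~\ref{lemm:1.4.1}, the stationarity condition $[\mathbf{D}(\bm\zeta)\mathbf{b}] \circ [\mathbf{D}(\bm\zeta)\mathbf{b}]^{\ast} = \mathbf{1}$ forces $|\theta_n|^2 = 1$ at the dual-optimal $\bm\zeta$, so the recovered $\tilde{\bm\theta}^{(t)}$ is already feasible for the original problem and the step ``with given $\tilde{\bm\theta}^{(t)}$, update $\mathbf{\Phi}_i^{(t)}$'' incurs no loss. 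The only remaining care is to confirm that the two quadratic-transform equivalences hold with equality at the chosen auxiliary variables $\boldsymbol{\beta}$ and $\boldsymbol{\rho}$, which follows directly from $\partial f_5/\partial \beta_k = 0$ and $\partial f_8/\partial \rho_k = 0$. I would note that this argument establishes convergence of the objective value; it does not by itself claim convergence to a global optimum, which is not asserted by the proposition.
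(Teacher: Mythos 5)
Your proposal is correct and follows essentially the same route as the paper's own proof: a block-coordinate monotonicity chain on the surrogate $f_2$ (whose maximization over $\boldsymbol{\alpha}$ recovers $f_1$), combined with boundedness of the weighted sum rate under the power constraint, yielding convergence of the objective sequence. In fact, your version is somewhat more careful than the paper's, since you explicitly verify the per-block ascent via the quadratic-transform equivalences and invoke Lemma~\ref{lemm:1.4.1} to show the disk relaxation \eqref{eq:28b} is tight, steps the paper compresses into the single assertion that ``the optimum solution can be attained at each iteration.''
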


\begin{proof}
For ease of illustration, we introduce a variable $t$ as the iteration index in Algorithm \ref{alg:10.1}. Since the optimum solution can be attained at each iteration, we have
\begin{align}
f_1(\mathbf W^{(t+1)}, \mathbf \Phi_i^{(t+1)}) &= f_2(\mathbf W^{(t+1)}, \mathbf \Phi_i^{(t+1)}, \bm \alpha^{(t+1)}) \nonumber \\
&\ge f_2(\mathbf W^{(t+1)}, \mathbf \Phi_i^{(t+1)}, \bm \alpha^{(t)}) \nonumber \\
&\ge f_2(\mathbf W^{(t)}, \mathbf \Phi_i^{(t+1)}, \bm \alpha^{(t)}) \nonumber \\
&\ge f_2(\mathbf W^{(t)}, \mathbf \Phi_i^{(t)}, \bm \alpha^{(t)}) \nonumber \\
&= f_1(\mathbf W^{(t)}, \mathbf \Phi_i^{(t)}).
\end{align}
Note that the weighted sum-rate maximization problem is bounded above due to the power constraints. Hence, the original objective $f_1$ is monotonically non-decreasing after each iteration.
\end{proof}

\subsection{Low-complexity Reflection Pattern Algorithm}
\label{sec:5.2}

Although the relaxation-based SDP method in Algorithm \ref{alg:10.1} is a straightforward solution to the reflection pattern optimization, unit-modulus constraints are generally converted into quadratic terms. By suppressing the rank-one constraint, problem \eqref{eq:28} becomes SDR form, typically solved by eigen-decomposition.
However, the SDR problem suffers from a critical drawback: the number of optimization variables increases quadratically with the number of IRS elements. For this reason, we develop a UCMO method that can directly deal with the unit-modulus constraints.

In light of the special geometry of the constraint $\vert \theta_{n} \vert = 1$, we resort to Riemannian-Geometric optimization tools \cite{AbsMahSep2008}. The feasible region of problem \eqref{eq:27} constitutes a unit-circle complex manifold. The manifold representation can provide a relatively concise form without relaxation.
The principle of Riemannian manifold optimization extends the gradient descent (GD) from Euclidean space to a Riemannian manifold space. 
Thus, the original constrained optimization problem can be transformed into an unconstrained optimization problem and optimized by Riemannian GD.

Referring to problem \eqref{eq:28}, the unit-modulus constrained optimization can be given by
\begin{subequations}
\label{eq:38}
\begin{align}
\underset{\tilde{\bm\theta}}{\max} \quad & f_{10}(\tilde{\bm\theta}) = -\tilde{\bm\theta}^{\mathsf H} \mathbf{A} \tilde{\bm\theta} + 2\mathsf{Re}\{\tilde{\bm\theta}^{\mathsf H} \mathbf{b}\}, \label{eq:38a} \\
\mathrm{s.t.} \quad &  \vert {\theta}_{n} \vert^2 = 1, \quad \forall n=1,2,\cdots, N. \label{eq:38b}
\end{align}
\end{subequations}
By constructing the smooth manifold structure on mapping space, problem \eqref{eq:38} can be reformulated as:
\begin{align}
\underset{ \tilde{\bm \theta} \in \mathcal S^{N}}{\min} \quad & f_{\mathrm{R}}( \tilde{\bm \theta}) = -{\tilde{\bm \theta}}^{\mathsf H} \mathbf{A} {\tilde{\bm \theta}} + 2 \mathsf{Re} \left( \tilde{\bm \theta}^{\mathsf H} \mathbf{b} \right), \label{eq:39}
\end{align}
where $\mathcal{S}^{N}$ denotes the manifold space defined by the unit-modulus constraints, i.e.,
\begin{align}
\mathcal S^{N} = \left\{ \tilde{\bm\theta} \in \mathbb{C}^{N \times 1}: \vert \theta_1 \vert = \vert \theta_2 \vert = \cdots = \vert \theta_{N} \vert = 1 \right\}, \label{eq:40}
\end{align}
where $\mathcal{S} = \left\{ \theta_n \in \mathbb{C}: \theta_n \theta_n^{\ast} = \mathsf{Re}\{\theta_n\}^2+\mathsf{Im}\{\theta_n\}^2 = 1  \right\}$ is known as a complex circle and can be viewed as a sub-manifold of $\mathbb{C}$. The search space $\mathcal{S}^{N}$ is the product of $N$ complex circles. Referred to as a unit-circle complex manifold, the search space $\mathcal{S}^N$ is a sub-manifold of $\mathbb{C}^{N \times 1}$.

\begin{figure}[t]
	\centering{}\includegraphics[width=3.5in]{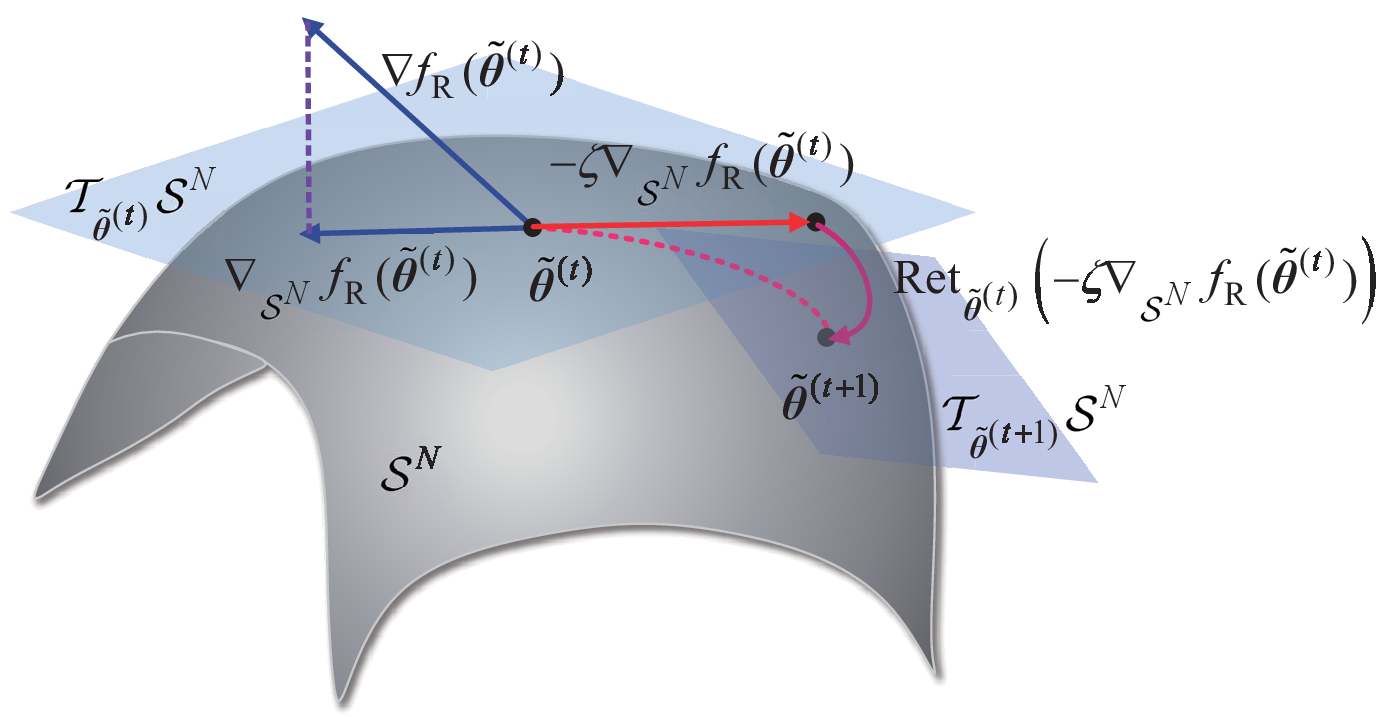}
	\caption{An illustration of GD on Riemannian manifolds.}
	\label{fig:7}
\end{figure}

The developed UCMO algorithm performs GD on the unit-circle complex manifold. The GD on the Riemannian manifold is similar to that in the Euclidean space. It consists of two phases: finding the descent direction of the current solution by computing the negative Riemannian gradient and decreasing the value of the objective function via line search~\cite{8125771}. 
The Riemannian gradient of $f_{\mathrm{R}}(\tilde{\bm\theta})$ at the current iteration point ${\tilde{\bm\theta}}^{(t)} \in \mathcal{S}^{N}$ is the projection of the search direction in the Euclidean space onto the tangent space $\mathcal{T}_{\tilde{\bm\theta}^{(t)}}\mathcal{S}^{N}$, as given by
\begin{align}
\mathcal{T}_{\tilde{\bm\theta}^{(t)}}\mathcal{S}^{N} = \left\{ \bm{\eta} \in \mathbb{C}^{(N+1) \times 1}: \mathsf{Re} \{ \bm{\eta}^{\ast} \circ {\tilde{\bm\theta}}^{(t)} \} = \mathbf{0}  \right\}. \label{eq:41}
\end{align}
Based on the projection operator, the Riemannian gradient of $f_{\mathrm{R}}(\tilde{\bm\theta}^{(t)})$ is obtained as
\begin{align}
\nabla_{\mathcal S^{N}} f_{\mathrm{R}}(\tilde{\bm\theta}^{(t)}) 
&= \mathsf{Proj}_{\mathcal{T}_{\tilde{\bm\theta}^{(t)}}\mathcal{S}^{N} } \left( \nabla f_{\mathrm{R}}(\tilde{\bm\theta}^{(t)}) \right) \nonumber\\
&= \nabla f_{\mathrm{R}}(\tilde{\bm\theta}^{(t)}) - \mathsf{Re} \{ \nabla f_{\mathrm{R}}(\tilde{\bm\theta}^{(t)})^{\ast} \circ f_{\mathrm{R}}(\tilde{\bm\theta}^{(t)}) \} \circ f_{\mathrm{R}}(\tilde{\bm\theta}^{(t)}), \label{eq:42}
\end{align}
where $\nabla f_{\mathrm{R}}({\tilde{\bm\theta}}^{(t)}) $ is the Euclidean gradient of $f_{\mathrm{R}}(\tilde{\bm\theta}^{(t)})$ at ${\tilde{\bm\theta}}^{(t)}$, as given by
\begin{align}
\nabla f_{\mathrm{R}}({\tilde{\bm\theta}}^{(t)}) = - 2 \mathbf{A} {{\tilde{\bm\theta}}^{(t)}} + 2 \mathbf{b}.  \label{eq:43}
\end{align}
As a result, the current point ${\tilde{\bm\theta}}^{(t)}$ in the tangent space $\mathcal{T}_{\tilde{\bm\theta}^{(t)}}\mathcal{S}^{N}$ is updated as
\begin{align}
{\tilde{\bm\theta}}'^{(t)} = {\tilde{\bm\theta}}^{(t)} - \zeta \nabla_{\mathcal S^{N}} f_{\mathrm{R}}(\tilde{\bm\theta}^{(t)}), \label{eq:44}
\end{align}
where $\zeta > 0$ is a carefully-chosen constant step size\footnote{To ensure stability and convergence of the UCMO algorithm, the step size $\zeta$ should be selected to satisfy $\zeta \le 1/\lambda_{\mathbf{A}}$ where $\lambda_{\mathbf{A}}$ represents the largest eigenvalue of the matrix $\mathbf{A}$ in problem \eqref{eq:39}. This optimization problem can be solved by leveraging the Manopt toolbox~\cite{manopt}.}. Since ${\tilde{\bm\theta}}'^{(t)}$ is still in the tangent space $\mathcal T_{{\bm \theta}^{(i)}} \mathcal{S}^{N+1}$ but may not be on the manifold $\mathcal S^{N}$, the Retraction mapping operation should be applied to move the point ${\tilde{\bm\theta}}'^{(t)}$ back onto the manifold $\mathcal{S}^{N}$. Finally, the point ${\tilde{\bm\theta}}^{(t+1)}$ after using the Retraction mapping is given by
\begin{align}
{\tilde{\bm\theta}}^{(t+1)} &= \mathsf{Ret}_{{\tilde{\bm\theta}}^{(t)} } \left( - \zeta \nabla_{\mathcal{S}^{N}} f_{\mathrm{R}}(\tilde{\bm\theta}^{(t)}) \right) = \frac{{\tilde{\bm\theta}}^{(t)} - \zeta \nabla_{\mathcal S^{N}} f_{\mathrm{R}}(\tilde{\bm\theta}^{(t)})}{\Vert {\tilde{\bm\theta}}^{(t)} - \zeta \nabla_{\mathcal S^{N}} f_{\mathrm{R}}(\tilde{\bm\theta}^{(t)}) \Vert}. 
\label{eq:45}
\end{align}
The above operations are illustrated in Fig. \ref{fig:7} and summarized in Algorithm \ref{alg:2}.

\begin{algorithm}[htbp]
\caption{The proposed UCMO algorithm for reflection pattern optimization.}
\label{alg:2}
\LinesNumbered
\SetKwInOut{KIN}{Initialization}
\SetKwInOut{KOUT}{Output}
\KIN{Set feasible values of $\tilde{\bm\theta}^{(0)}$ and iteration index $t=0$.}
\KOUT{Reflection pattern $\tilde{\bm\theta}$.}
\Repeat{The value of $\vert f_{\mathrm{R}}(\tilde{\bm\theta}^{(t)}) - f_{\mathrm{R}}(\tilde{\bm\theta}^{(t-1)}) \vert$ in \eqref{eq:39} converges}{
Set $t \leftarrow t+1$\;
Calculate the Euclidean gradient $\nabla f_{\mathrm{R}}(\tilde{\bm\theta}^{(t)})$ at $\tilde{\bm\theta}^{(t)}$ using \eqref{eq:43}\;
Construct the tangent space $\mathcal T_{{\bm\theta}^{(t)}} \mathcal{S}^{N}$ and calculate the current Riemannian gradient $\nabla_{\mathcal S^{N}} f_{\mathrm{R}}(\tilde{\bm\theta}^{(t)})$ using \eqref{eq:42}\;
Perform GD algorithm over the current tangent space using \eqref{eq:44}\;
Update ${\tilde{\bm\theta}}^{(t+1)}$ using the Retraction mapping operator according to \eqref{eq:45}\;
}
\end{algorithm}

We compare the computational complexity between the SDR and the developed UCMO algorithms in reflection pattern optimization. The SDR presented in Section \ref{sec:5.1} incurs a high complexity of $\mathcal{O}(N^{6.5})$ \cite{cmu2.12095} per iteration, since it expands the variable dimension. For the proposed UCMO algorithm described in Algorithm \ref{alg:2}, the complexity is dominated by the evaluation of the Riemannian GD, which is $\mathcal{O}(N^{2})$ per iteration \cite{8706630}. To this end, the proposed UCMO algorithm is order of magnitude lower than the SDR-based alternative.

\section{Two-timescale Reflection Pattern Design}\label{sec:6}
Precise CSI, real-time reflection pattern computing and control, and low-overhead configuration signaling are key challenges for IRS optimization. Therefore, in this section, we develop a two-timescale reflection pattern scheme for the IRS-assisted MIMO system to maximize the achievable average sum-rate, which configures the IRS per frame based on the S-CSI and updates the active beamforming according to the available I-CSI slot by slot during the frame. Furthermore, a low-complexity PSO method, which is inspired by the data-driven paradigm, is efficiently developed to solve the large-timescale reflection optimization.

\subsection{Temporally Correlated Channel and Two-timescale Transmission Strategy} \label{sec:6.1}
We consider a generic IRS-assisted MIMO communication system, where the BS is equipped with $N_{\mathrm{t}}$ transmit antennas of ULA, and MU is equipped with $K$ receive antennas of ULA.
The IRS is a uniform rectangular array (URA) consisting of $N=N_x N_y$ meta-atoms, with $N_x$ meta-atoms per row and $N_y$ meta-atoms per column.
The direct channel from the BS to the MU is denoted by $\mathbf{H}_{\mathrm{d}}$, and the BS-IRS and IRS-MU channels are denoted by $\mathbf{G}$ and $\mathbf{H}_{\mathrm{r}}$, respectively. For the brevity of notation, we define $\mathcal{H}=\{\mathbf{G}, \mathbf{H}_{\mathrm{d}}, \mathbf{H}_{\mathrm{r}}\}$ as the full channel ensemble.

Typically, IRS is deployed to achieve the LoS propagation between the BS and the IRS. So, the rank-one BS-IRS channel is assumed. Generally, the locations of IRSs are known to the BS, and so is the BS-IRS channel, which is thus assumed to be time-invariant due to the stationarity of both the BS and IRS.
The BS-IRS channel is given by
\begin{align}
\mathbf{G}= \sqrt{L_{\mathrm{br}}} \mathbf{a}(\phi) \mathbf{b}^{\mathsf{H}}(\phi_{\mathrm{e}}, \phi_{\mathrm{a}}) \in \mathbb{C}^{N_{\mathrm{t}} \times N},
\label{eq:46}
\end{align}
where $L_{\mathrm{br}}$ is large-scale fading, $\phi$ is the angle-of-departure (AoD) at the BS, $\phi_{\mathrm{e}}(\text{or} \ \phi_{\mathrm{a}})$ is the elevation (or azimuth) angles-of-arrival (AoA) at the IRS, $\mathbf{a}(\phi)$ is the array steering vector of the BS, and $\mathbf{b}(\phi_{\mathrm{e}}, \phi_{\mathrm{a}})$ is the array steering vector of the IRS. The array steering vectors are defined in the same way as \eqref{eq:3}.

Due to the mobility of MUs, the IRS-MU and BS-MU channels are multi-path and fast time-varying.
To illustrate,  we focus on a time frame within which the S-CSI of all links remains unchanged. The time frame is divided evenly into $T$ time slots. The channel coefficients of the links, i.e., I-CSI, do not change during a time slot. The NLoS channel coefficient matrices are correlated across different time slots.
At time slot $t$, the IRS-MU channel can be given
\begin{align}
\mathbf{H}_{\mathrm{r}} [t] = \sqrt{L_{\mathrm{ur}}} \left(  \kappa \mathbf{H}_{\mathrm{r}}^{\mathrm{LoS}} + \sqrt{1-\kappa^2} \mathbf{H}_{\mathrm{r}}^{\mathrm{NLoS}} [t] \right)\in \mathbb{C}^{N \times K},
\label{eq:47}
\end{align}
where $L_{\mathrm{ur}}$ accounts for large-scale fading; $\kappa$ is the Rician fading-related factor; $\mathbf{H}_{\mathrm{r}}^{\mathrm{LoS}}$ and $\mathbf{H}_{\mathrm{r}}^{\mathrm{NLoS}}\sim \mathcal{CN}(0,\mathbf{I})$ are the deterministic LoS component and the NLoS component, respectively.
Likewise, the BS-UE channel can be given by
\begin{align}
\mathbf{H}_{\mathrm{d}}[t] = \sqrt{L_{\mathrm{bu}}} \left( \kappa \mathbf{H}_{\mathrm{d}}^{\mathrm{LoS}} + \sqrt{1-\kappa^2} \mathbf{H}_{\mathrm{d}}^{\mathrm{NLoS}} [t] \right)\in \mathbb{C}^{N_{\mathrm{t}} \times K}.
\label{eq:48}
\end{align}

Herein, we use the first-order auto-regressive process \cite{6937178} to characterize temporal-variant NLoS channel as
\begin{align}
\mathbf{H}^{\mathrm{NLoS}} [t] = \rho \mathbf{H}^{\mathrm{NLoS}} [t-\tau] + \mathbf{\Delta}[t],
\label{eq:49}
\end{align}
where $\mathbf{\Delta}[t] \sim \mathcal{CN}\left(0, (1-\rho^2 )\mathbf{I} \right)$ yields the complex Gaussian distribution. Following the Jakes' model \cite{8753608}, the temporal correlation coefficient is $\rho=J_0(2\pi\bar{f}_d)$, where $J_0(\cdot)$ is the zero-th order Bessel function of the first kind; $\bar{f}_d = f_d \tau$ is the normalized Doppler frequency; $f_d$ is the maximum Doppler frequency; and $\tau$ is the delay between the uplink channel estimation and downlink transmission.

In time-varying scenarios, the BS first estimates the CSI based on the uplink pilots sent by the MU. Then, according to the estimated CSI, the BS computes the beamformer for downlink transmissions. When used for downlink transmissions, the CSI estimated in the uplink can expire by $\tau$ slots. Thus, the channel reciprocity may not hold in a practical TDD system.

In the considered IRS-assisted MIMO system, the BS transmits $M \le N_{\mathrm{t}}$ data streams to the MU.
At the $t$-th slot, the received signal at the MU is given by
\begin{align}
\mathbf{y}[t]
&= \mathbf{W}_{\mathrm{r}}^{\mathsf{H}}    \underbrace{ \left( \mathbf{H}_{\mathrm{d}}^{\mathsf{H}} [t] + \mathbf{H}_{\mathrm{r}}^{\mathsf{H}}[t] \mathbf{\Theta}^{\mathsf{H}}[t]  \mathbf{G}^{\mathsf{H}} [t] \right) }_{\text{effective channel}} \mathbf{W}_{\mathrm{t}} \mathbf{P}[t] \mathbf{x}[t]  + \mathbf{n}[t]    \nonumber \\
&= \mathbf{W}_{\mathrm{r}}^{\mathsf{H}}\tilde{\mathbf{H}}^{\mathsf{H}} [t]  \mathbf{W}_{\mathrm{t}} \mathbf{P}[t]   \mathbf{x}[t]  + \mathbf{n}[t] ,
\label{eq:50}
\end{align}
where $\mathbf{x} \in \mathbb{C}^{M \times 1}$ is the transmit signal, $\mathbf{n} \sim \mathcal{CN}(0, \sigma^2 \mathbf{I}_{K})$ is the additive Gaussian noise, $\mathbf{W}_{\mathrm{t}}$ is the transmit beamformer at the BS, $\mathbf{W}_{\mathrm{r}}$ is the receive beamformer at the MU, $\mathbf{P}=\mathrm{diag}\left\{[ \sqrt{P_1}, \sqrt{P_2},\cdots, \sqrt{P_{M}} ]\right\} \in \mathbb{C}^{M \times M}$ collects the transmit powers allocated to data streams, $\mathbf{\Theta} =\mathrm{diag} \left\{\exp (j\boldsymbol{\theta}) \right\} \in \mathbb{C}^{N\times N}$ is the reflection pattern matrix of the IRS, and $\tilde{\mathbf{H}} \triangleq \mathbf{H}_{\mathrm{d}}  + \mathbf{G} \mathbf{\Theta} \mathbf{H}_{\mathrm{r}} \in \mathbb{C}^{N_{\mathrm{t}} \times K}$ is defined as \emph{effective channel}.

It is not straightforward to estimate the effective channel $\tilde{\mathbf{H}}$, because part of it can be augmented by the IRS. The passive meta-atoms are made up of PIN-diodes and do not involve active RF chains or any signal processing. For this reason, cascaded channel estimation is widely adopted to estimate the effective channels~\cite{9133156}.
However, as for the slot-by-slot reflection configuration of the IRS, the cascaded channel estimation needs to be frequently conducted every time the IRS is reconfigured.
In consequence, unacceptably large training overhead and delay would be required.

To this end, we develop a two-timescale beamforming strategy, configuring the IRS once per frame based on the S-CSI and updating the active beamformers in every slot based on available CSI, which may be outdated.
As illustrated in Fig. \ref{fig:8}, the S-CSI parameters including $\{ L_{\mathrm{ur}}, L_{\mathrm{bu}} \}$ and $\mathbf{H}_{\mathrm{r}/\mathrm{d}}^{\mathrm{LoS}}$ are evaluated at the beginning of a frame based on the S-CSI estimation approach~\cite{9369969, 9408385}. Based on the S-CSI, the smart controller optimizes the IRS reflection patterns for the entire frame.
Given the large-timescale IRS reflection pattern in a frame, the small-timescale beamformers can be conducted slot by slot throughout the frame.

\begin{figure}[t]
	\centering{}\includegraphics[scale=0.65]{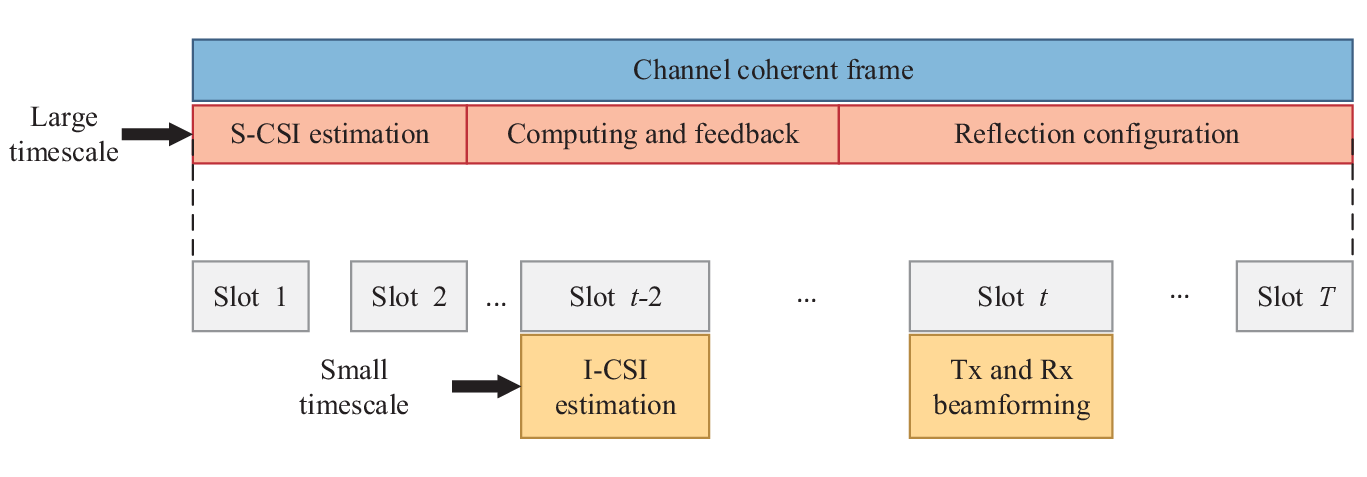}
	\caption{Pictorial illustration of the proposed two-timescale beamforming strategy.}
	\label{fig:8}
\end{figure}

Next, we analyze the small-timescale beamforming schemes in the case of time-varying channels. 
As shown in Fig. \ref{fig:9}(b), both BS and MU directly apply singular value decomposition (SVD) beamformers generated from the O-CSI, i.e., $\tilde{\mathbf{H}}[t-\tau]=\mathbf{U}[t-\tau]\mathbf{\Lambda}[t-\tau]\mathbf{V}^{\mathsf H}[t-\tau]$. The mismatch between beamformers and actual channel ($\mathbf{V}[t-\tau]$ and $\tilde{\mathbf{H}}[t]$) can lead to non-negligible inter-stream interference, deteriorating the rate performance. This is because mismatched unitary matrices cannot fully parallelize subchannels.

For this reason, we adopt a hybrid SVD and zero-forcing (ZF) beamforming scheme and extend it under power constraints to optimize the power allocation based on the O-CSI. 
As shown in Fig. \ref{fig:9}(c), the BS still performs the SVD beamformer $\mathbf{W}_{\mathrm{t}}=\mathbf{V}[t-\tau]$ based on the O-CSI, but sends the downlink pilots before the transmit beamformer, instead of sending pilots before the wireless channel after the transmit beamformer. Then, the MU conducts the ZF beamformer based on the combined information of the channel and transmit beamformer in the current slot. 

Since both pilots and symbols are precoded and transmitted through the equivalent combined channel, $\check{\mathbf{H}}[t] \triangleq \mathbf{V}^{\mathsf{H}} [t-1]\tilde{\mathbf{H}}[t] \in \mathbb{C}^{M \times K}$. The MU can estimate $\check{\mathbf{H}}[t]$ based on the precoded pilots.	
Based on the estimated combined channel, i.e., $\check{\mathbf{H}}$, the MU detects the data symbols using the following ZF beamformer:
\begin{align}
	\mathbf{W}_{\mathrm{r}}
	&= \left[ \Big(  \tilde{\mathbf{H}}^{\mathsf{H}}[t] \mathbf{V}[t-\tau] \Big)^{\mathsf{H}}  \tilde{\mathbf{H}}^{\mathsf{H}}[t] \mathbf{V}[t-\tau]   \right]^{-1} \Big(  \tilde{\mathbf{H}}^{\mathsf{H}} [t]\mathbf{V}[t-\tau] \Big)^{\mathsf{H}} \nonumber \\
	&= \left( \check{\mathbf{H}}[t] \check{\mathbf{H}}^{\mathsf{H}}[t] \right)^{-1}  \check{\mathbf{H}}[t] . 
 \label{eq:51}
\end{align}
The received signal is obtained as
\begin{align}
	\mathbf{y}[t] &= \mathbf{W}_{\mathrm{r}}^{\mathsf{H}} \left( \tilde{\mathbf{H}}^{\mathsf{H}}[t] \mathbf{V}[t-\tau] \mathbf{P} [t] \mathbf{x} [t]+ \mathbf{n}[t] \right) \nonumber\\
	&=  \mathbf{P}[t] \mathbf{x}[t] + \left( \check{\mathbf{H}}[t] \check{\mathbf{H}}^{\mathsf{H}}[t] \right)^{-1}  \check{\mathbf{H}}[t]  \mathbf{n} [t].
 \label{eq:52}
\end{align}
The resulting SINR of the $m$-th data stream is given by
\begin{align}
	\gamma_m[t]
	&=\frac{ P_m [t] }{\sigma^2 \left[ \left( \check{\mathbf{H}}[t] \check{\mathbf{H}}^{\mathsf{H}}[t] \right)^{-1}  \right]_{(m,m)} }. 
 \label{eq:53}
\end{align}


\begin{figure}[!h]
	\centering{}\includegraphics[scale=0.4]{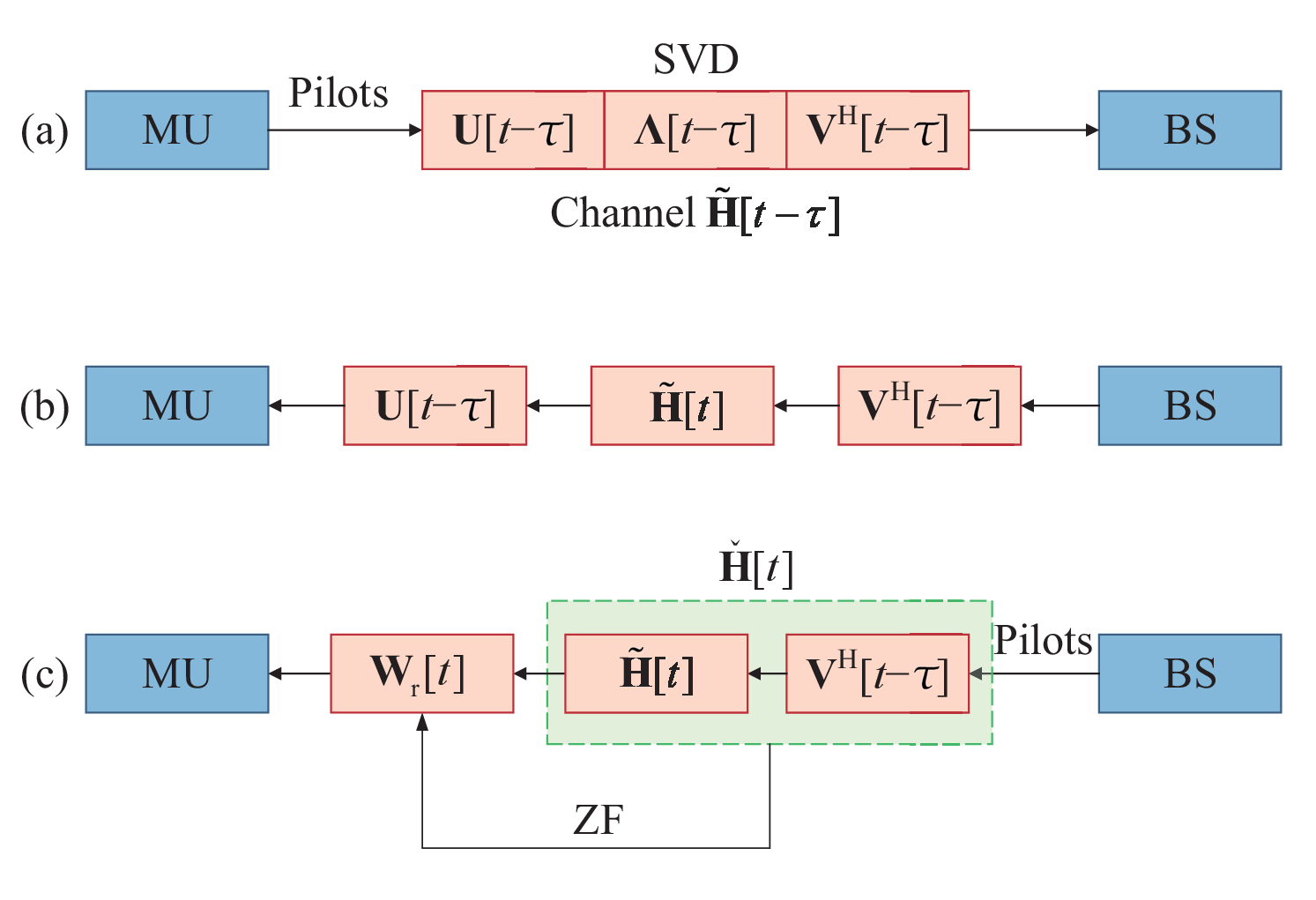}
	\caption{The small-timescale beamforming procedure: (a) uplink channel estimation; (b) conventional SVD-based downlink transmission; and (c) hybrid SVD and ZF beamforming-based downlink transmission.}
	\label{fig:9}
\end{figure}

This beamforming scheme can efficiently eliminate the inter-stream interference resulting from the SVD beamforming based on the O-CSI.
The SVD precoding with water-filling power allocation is also known to be optimal under I-CSI~\cite{eldad2008next}. It is still likely to produce effective parallel streams based on the O-CSI by taking advantage of the temporal correlation of the auto-regressive channels. 
To this end, the small-timescale SVD-ZF beamforming strategy is increasingly closer to the optimum as the channel changes more slowly and undergoes stronger temporal correlation. The strategy becomes optimal when the channel stops changing.

With the hybrid SVD and ZF scheme in the small-timescale, the goal of the proposed two-timescale transmission strategy is to maximize the AASR of the considered IRS-assisted MIMO system by jointly optimizing the long-term reflection pattern of the IRS and the per-slot power allocation of the BS.
This problem is formulated as
\begin{subequations}
\begin{align}
\text{(P1)}: \quad & \underset{ \mathbf{\Theta} }{\max} \  \mathbb{E} \left\{ \underset{ \mathbf{P} }{\max} \  \sum_{m=1}^{M}
\log_2 \Big[ 1+ \gamma_m (\mathbf{\Theta}, \mathbf{P}[t]) \Big]  \right\} \label{eq:54} \\
\mathrm{s.t.} \quad &   \sum_{m=1}^{M} P_m  = P_{\mathrm{tot}},  \quad P_m \ge 0 , \forall m, \label{eq:54a} \\
& \left\vert [\mathbf{\Theta}]_{n,n} \right\vert = 1, \quad  \mathrm{angle}\left( [\mathbf{\Theta}]_{n,n} \right) \in [0, 2\pi), \forall n. \label{eq:54b}
\end{align}
\end{subequations}
where \eqref{eq:54a} guarantees the total transmit power of the BS does not exceed its power budget $P_{\mathrm{tot}}$, and \eqref{eq:54b} specifies the unit-modulus nature and the phase-shift range of the IRS meta-atoms.

\subsection{Reflection Pattern Design with Statistical CSI} \label{sec:6.2}

We reformulate problem (P1) as a joint design of power allocation at the BS and reflection configuration at the IRS. Then, we provide the solutions for large- and small-timescale designs.
We first rewrite the SINR of the $m$-th data stream at slot $t$, $\gamma_m(P_m[t], \mathbf{\Theta}, \mathcal{H}[t-\tau])$, as
\begin{align}
 \gamma_m(P_m[t], \mathbf{\Theta}, \mathcal{H}[t-\tau]) =&  \frac{ P_m[t] }{\sigma^2 \left[ \left\{ \check{\mathbf{H}}[t] \check{\mathbf{H}}^{\mathsf{H}} [t]\right\}^{-1} \right]_{(m, m)} } \nonumber\\
=&  \frac{ P_m[t] }{\sigma^2 \left[\mathbf{R}^{-1}[t] \right]_{(m,m)} }  = \frac{ P_m[t] }{\sigma^2 f_m( \boldsymbol{\theta} )[t] } , \label{eq:55}
\end{align}
where $\mathbf{R}[t]=\check{\mathbf{H}}[t] \check{\mathbf{H}}^{\mathsf{H}} [t]$ is the channel covariance, $f_m( \boldsymbol{\theta} )[t]=\left[  \mathbf{R}^{-1} [t] \right]_{(m,m)}$ is the $m$-th diagonal entry of $\mathbf{R}^{-1}$.
Problem (P1) can be rewritten as
\begin{subequations}
\begin{align}
\text{(P1')}: \quad & \underset{ \boldsymbol{\theta} }{\max} \  \mathbb{E} \left\{ \underset{ \mathbf{p} }{\max} \  \sum_{m=1}^{M}
\log_2 \left[ 1+ \frac{P_m [t]}{\sigma^2 f_m( \boldsymbol{\theta} ) [t]} \right]  \right\} \label{eq:56} \\
\mathrm{s.t.} \quad &  \mathbf{p} \succeq 0, \quad \mathbf{1}^{\mathsf{T}}\mathbf{p} = P_{\mathrm{tot}}, \label{eq:56a} \\
&  \mathbf{\Theta}= \mathrm{diag}\left\{ e^{ j\boldsymbol{\theta} } \right\},  {\theta}_{n} \in [0, 2\pi), \forall n=1,2,\cdots, N,  \label{eq:56b}
\end{align}
\end{subequations}
where $\mathbf{p}=[P_1, \cdots, P_M]^{\mathsf{T}}$ collects the transmit powers of $M$ data streams.

Problem (P1') is a stochastic optimization problem because of the per-slot optimization of the power allocation at the small timescale.
At each slot, the power allocation is based on the available O-CSI, with the constant IRS reflection pattern applied at the current frame.
On the other hand, the optimization of the IRS configuration needs to farsightedly maximize the AASR over random channel samples and potential per-slot power allocations throughout the frame.
Although the stochastic optimization problem is converted into a deterministic non-convex optimization problem, minimizing each diagonal entry of the inverse of the channel covariance, i.e., $\left[ \mathbf{R}^{-1}\right]_{(m,m)}$, is still arduous.

Conversely, the small-timescale power allocation is a deterministic optimization problem.
Given the IRS reflection pattern of the large timescale, with the hybrid SVD and ZF scheme,  problem (P1) is reformulated as
\begin{align}
	\text{(P2)}: \quad 
	\underset{ \mathbf{p} }{\max} \  \sum_{m=1}^{M}
	\log \left( 1+ \frac{P_m[t] f_m^{-1}(\boldsymbol{\theta}) [t] }{\sigma^2 } \right) \quad
	\mathrm{s.t.} \    \eqref{eq:56a}  \nonumber.
\end{align}
Problem (P2) is convex with respect to $\mathbf{p}$, and can be solved by the classic water-filling algorithm. The optimal small-timescale power allocation is given by
\begin{align}
	P_m^{\mathrm{opt}} [t]=  \left[ \frac{1}{\mu \ln 2}- \sigma^2 f_m(\boldsymbol{\theta}) [t] \right]^{+}. \label{eq:57}
\end{align}
where $\mu$ is the Lagrange multiplier and $[x]^{+}= \max\{x, 0\}$, 
The resulting per-slot achievable rate is
\begin{align}
	\sum_{m=1}^{M}
	\log_2 \left( 1+ \frac{P_m [t]}{\sigma^2 f_m(\boldsymbol{\theta})[t] } \right)
	= \sum_{m=1}^{M}
	\log \left( 1+ \frac{1}{\sigma^2} \left[ \frac{ 1 }{ \mu f_m(\boldsymbol{\theta}) [t] \ln 2 } - \sigma^2\right]^{+} \right).
 \label{eq:58}
\end{align}

The per-frame IRS reflection optimization is a stochastic optimization problem due to the expectation operation in the objective of the problem (P1').
To circumvent this impasse, we first transform the objective in a deterministic form by taking the average rate of $B$ random channel samples to approximate the expectation in the objective.
Suppose that the $b$-th sample of the IRS-related time-variant channels in the $t$-th time slot is $\mathcal{H}^{b}[t]= \{ \mathbf{H}_{\mathrm{d}}^{b}[t-\tau], \mathbf{H}_{\mathrm{r}}^{b}[t-\tau] \}$, accounting for the O-CSI at the BS. The AASR of the $m$-th data-stream is approximated by
\begin{align}
\bar{r}_m \left(P_m^{\mathrm{opt}}( \boldsymbol{\theta}, \{\mathcal{H}^{b}\}_{b=1}^{B}), \boldsymbol{\theta} \right) = \frac{\sum_{t=1}^{T} \sum_{b=1}^{B} \log_2 \left[ 1 +  \gamma_m \left( P_m[t], \boldsymbol{\theta}, \mathcal{H}^{b}[t-\tau] \right)  \right]}{TB}. \label{eq:59}
\end{align}

The large-timescale passive beamforming problem is cast as
\begin{align}
\text{(P3)}: \quad   \underset{ \boldsymbol{\theta} }{\max} \  \sum_{m=1}^{M} \bar{r}_m \left(P_m^{\mathrm{opt}}( \boldsymbol{\theta}, \{\mathcal{H}^{b}\}_{b=1}^{B}), \boldsymbol{\theta} \right)  \quad
\mathrm{s.t.} \  \eqref{eq:56b}, \nonumber
\end{align}
Apart from the non-convex constant-modulus constraint, the power allocation variables are closely coupled with the reflection patterns of the IRS. Moreover, maximizing the AASR involves minimizing the diagonal entries, i.e., $\left[\mathbf{R}^{-1}\right]_{(m,m)}$, which makes it difficult to optimize $\bm{\theta}$ since $\left[\mathbf{R}^{-1}\right]_{(m,m)}$ cannot be written as an explicit function of $\bm{\theta}$.
Conventional convex optimization or convexification methods cannot be applied to solve the problem (P3).

We propose a modified PSO framework to solve the problem (P3).
As a widely-used meta-heuristic bionic optimization algorithm~\cite{jian2020hybrid}, PSO can find the optimal solution with little prior information. It enjoys the benefits of fewer parameters, simple calculation implementation, and fast convergence.
In PSO, the position of a particle stands for a potential solution. The fitness function of a particle is typically defined to be the optimization objective.
We define a possible reflection vector $\boldsymbol{\theta}$ as the position of a particle.
The fitness function can be taken as the objective of (P3), as given by
\begin{align}
J(\boldsymbol{\theta}) = \sum_{m=1}^{M} \bar{r}_m \left(P_m^{\mathrm{opt}}( \boldsymbol{\theta}, \{\mathcal{H}^{b}\}_{b=1}^{B} ), \boldsymbol{\theta} \right).
\label{eq:60}
\end{align}
For each particle $\boldsymbol{\theta}$, the optimal power solution can be uniquely found according to \eqref{eq:57}. This guarantees the uniqueness of the fitness function in \eqref{eq:60}.
There are $P$ particles utilized to search for the optimal reflection pattern in the constrained search space, and each is assigned a velocity at every iteration.
The set of particle positions is denoted by $\mathcal{L}_P = \{\boldsymbol{\theta}_1, \boldsymbol{\theta}_2, \cdots, \boldsymbol{\theta}_P\}$.
A particle is assessed based on its fitness function in every iteration to determine whether the current position implies a good solution. The particles record their best positions ever found.

The global optimal position is selected from the best positions of all particles.
Let $\hat{\boldsymbol{\theta}}_{p}^{(i)}$ and $\hat{\boldsymbol{\theta}}^{(i)}$ denote the best position of the $p$-th particle and the global optimal position of all particles at the $i$-th iteration, respectively.
The $p$-th particle updates its position according to its velocity $\mathbf{v}_{p}^{(i)}$ at the $i$-th iteration.
The newly updated position is used to determine the best position of this particle, if the fitness value of the previous best position is lower than that of the new position. Otherwise, the best position of this particle remains unchanged. After one round of the fitness value evaluation, the global optimal position of all particles is accordingly determined.
Both the velocity and position of the $p$-th particle at the $i$-th iteration are updated by
\begin{align}
\mathbf{v}_{p}^{(i)} &= w \mathbf{v}_{p}^{(i-1)} + c_1 \varepsilon_1 (\hat{\boldsymbol{\theta}}_{p}^{(i-1)} - \boldsymbol{\theta}_{p}^{(i-1)}) + c_2 \varepsilon_2 (\hat{\boldsymbol{\theta}}^{(i-1)} - \boldsymbol{\theta}_{p}^{(i-1)}), \label{eq:61}\\
\boldsymbol{\theta}_{p}^{(i)} &= \boldsymbol{\theta}_{p}^{(i-1)} + \mathbf{v}_{p}^{(i)}, \label{eq:62}
\end{align}
where $w$ is the non-negative inertia weight of a particle; $c_1$ and $c_2$ denote cognitive and social scaling factors, respectively; and $\varepsilon_1$ and $\varepsilon_2$ are two independent random variables with a uniform distribution in $(0,1)$. The other control parameters, such as the swarm size $P$ and iteration number $I_{\mathrm{iter}}$, are critical for the convergence rate and global search performance.
Once the parameters updated in \eqref{eq:62} are outside $(-\pi, \pi]$, their boundary values are taken.

Each particle requires a notable amount of calculations for fitness evaluation per iteration since the fitness function requires a large number of random channel samples.
It can be computationally prohibitive to take large-scale channel samples (which are four-dimensional matrices with size of $N_{\mathrm{t}}\times K \times B \times P$) to obtain a fitness function value per iteration, potentially hindering the PSO convergence.

We develop a rsPSO algorithm that is able to substantially reduce the total fitness evaluation complexity while reaping satisfactory AASR performance. Specifically, we introduce a recursive sampling surrogate function to replace the fitness function in \eqref{eq:60}.
We partition all $B$ random channel samples into $N_B$ batches with $B_s = B/N_B$ samples per batch.
At each iteration, the mini-batch sampling surrogate function is given by
\begin{align}
J^{(i)} (\boldsymbol{\theta}) = \bar{\mu}^{(i)} J^{(i-1)}   + \mu^{(i)} \sum_{m=1}^{M} \sum_{t=1}^{T} \sum_{B=(i-1)B_s+1}^{i B_s} \frac{\log_2 \left[ 1 +  \gamma_m \left( P_m[t], \boldsymbol{\theta}, \mathcal{H}^{b}[t-\tau] \right)  \right] }{T B_s}, \label{eq:63}
\end{align}
where $\bar{\mu}^{(i)}=1-\mu^{(i)}$, $\mu^{(i)}$ denotes a decay weight coefficient associated with the new sampling of the AASR.
To guarantee more accurate sampling approximation as $i$ increases in the stochastic optimization, $\mu^{(i)}=i^{-0.2}$ is commonly used according to the diminishing stepsize rules \cite{7412752}.

The proposed rsPSO algorithm is summarized in Algorithm \ref{alg:3}. Given its use of random channel samples, the algorithm is \emph{data-driven} in spirit.
Given the S-CSI estimated at the beginning of a frame, the BS can generate the $L_B$ channel samples to approximate the future channels in the frame used for downlink transmissions. Recall that there are $T$ slots in the frame. The BS can generate the NLOS components of $\frac{B}{T}$ channel samples, and therefore, we set $T=I_{\mathrm{iter}}$ without loss of generality.
By repeating this process from $t=1$ to $T$, we can obtain $B$ time-varying channel samples to approximate CSI measurements of the upcoming $T$ slots.

\begin{algorithm}[h!]
\caption{The proposed rsPSO Algorithm}
\label{alg:3}
  \SetKwBlock{Begin}{Step 1:}{end}
  \Begin(\textbf{Swarm initialization}){
  \SetKwInOut{KwIn}{\textbf{Initialize the PSO control parameters}} \KwIn{$\{P, w, c_1, c_2, \varepsilon_1, \varepsilon_2, I_{\mathrm{iter}}\}$.}
  \SetKwInOut{KwIn}{\textbf{Initialize the position and velocity}} \KwIn{$\{\boldsymbol{\theta}_p^{(0)}\}_{p=1}^{P},  \{\mathbf{v}_{p}^{(0)}\}_{p=1}^{P}$.}
  Generate $B$ outdated channel samples according to the S-CSI and first-order AR model\;
  Separate the total channel samples into $N_B$ mini-batches with each $B_s$ samples\;
  For each particle $\boldsymbol{\theta}_p^{(0)}$, obtain the associated optimal powers $\mathbf{p}^{(0)}_p$ by using \eqref{eq:57}\;
  Evaluate the fitness value in \eqref{eq:63} of each particle using the first mini-batch of samples, and obtain the initial best position set $\{\hat{\boldsymbol{\theta}}_p\}_{p=1}^{P}$\;
  Find the global optimal position $\hat{\boldsymbol{\theta}} = \arg\max_{\hat{\boldsymbol{\theta}}} \{ J^{(0)} (\hat{\boldsymbol{\theta}}_1), J^{(0)} (\hat{\boldsymbol{\theta}}_2), \cdots, J^{(0)} (\hat{\boldsymbol{\theta}}_P) \}$\;}
  \SetKwBlock{Begin}{Step 2:}{end}
  \Begin(\textbf{Iterative search}){
  \For{$i=1: I_{\mathrm{iter}}$}{{
  Update particle velocity $\{\mathbf{v}_{p}^{(i)}\}_{p=1}^{P}$ and position $\{\boldsymbol{\theta}_p^{(i)}\}_{p=1}^{P}$ by using \eqref{eq:61} and \eqref{eq:62}\;
  \For{$p=1: P$}{
  Given particle $\boldsymbol{\theta}_p^{(i)}$, calculate the optimal powers $\mathbf{p}^{(i)}_p$ according to \eqref{eq:57}\;
  Evaluate the fitness value of particle $\boldsymbol{\theta}_p^{(i)}$ over $i^{\mathrm{th}}$ mini-batch of samples by using (\ref{eq:63})\;
  \eIf{$J^{(i)} > J^{(i-1)}$}
  {$\hat{\boldsymbol{\theta}}_p = \boldsymbol{\theta}_p^{(i)}$\;}{$\hat{\boldsymbol{\theta}}_p = \boldsymbol{\theta}_p^{(i-1)}$\;}
  Update $\hat{\boldsymbol{\theta}}_p = \arg\max_{\hat{\boldsymbol{\theta}}} \{ J^{(i)} (\hat{\boldsymbol{\theta}}_1), J^{(i)} (\hat{\boldsymbol{\theta}}_2), \cdots, J^{(i)} (\hat{\boldsymbol{\theta}}_P) \}$\;
  }
  }
  }
  }
\end{algorithm}

Finally, we analyze the computational complexity of the proposed rsPSO algorithm measured by floating point operations per second (FLOPS).
According to \eqref{eq:55}, the operations dominating the computational complexity of each particle include a matrix addition, four matrix multiplications, and a matrix inversion per channel sample per iteration, requiring $F_1 = 4 \left(N_{\mathrm{t}}+K \right)N^2 +  4M N_{\mathrm{t}} K + 4M^2  K + (4M^3+M^2+M)$ FLOPS \cite{4917823}.
The overall computational complexity of the algorithm is $P \times B_s \times F_1$ FLOPS, according to \eqref{eq:63}. In contrast, a direct use of all $B$ channel samples to evaluate the fitness function per iteration using \eqref{eq:59} incurs $P \times B_s \times F_1$ FLOPS, which is $N_B$ times the complexity of the new rsPSO algorithm since $B=N_B B_s$.

\section{Results and Discussions}\label{sec:7}
To demonstrate the potential of using IRSs for MIMO networks, in this section, we provide numerical results to verify the effectiveness and benefits of the above-proposed transmission architecture and resource allocation.
In particular, the BS is located in the origin point, and MUs are uniformly and randomly distributed in a circle centered at (40 m, 0 m) with a radius of 10 m. The total transmit power is $P_{\max}=30$ dBm.

\begin{figure}[h]
	\centering{}\includegraphics[width=4in]{./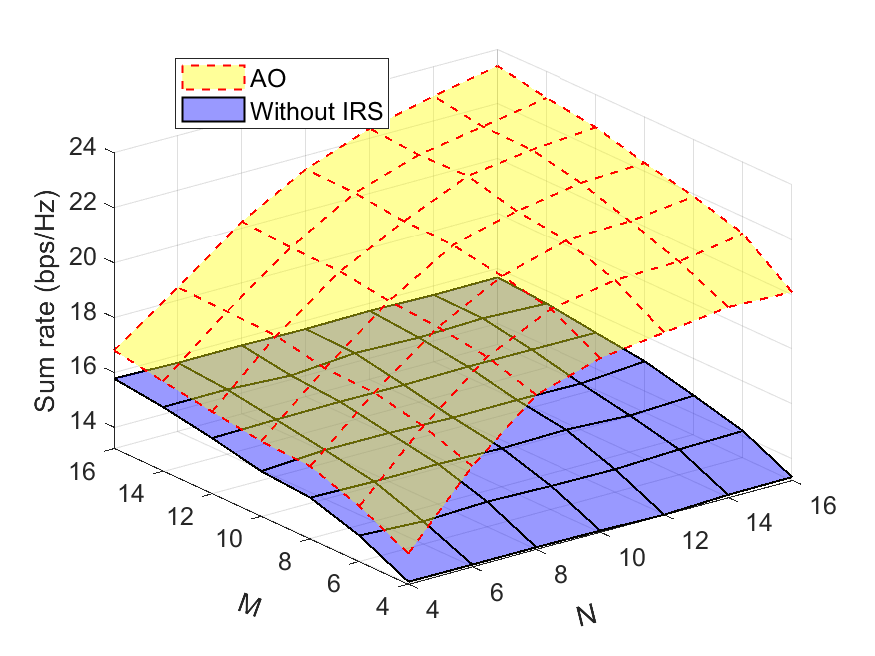}
	\caption{Achievable sum-rate versus the numbers of BS antennas $M$ and IRS elements $N$.}
	\label{fig:10}
\end{figure}

Firstly, we investigate the passive beamforming gains of IRS in the presence of direct BS-MU links.
Fig. \ref{fig:10} shows the impact of the number of BS antennas $M$ and IRS elements $N$ on the achievable sum-rate. The sum-rate of the proposed algorithm with the IRS exhibits superiority over the scheme without the IRS. The achievable sum-rate grows as $M$ and $N$ increase. We see that the achievable sum rate with $N$ grows faster than that with $M$. More importantly, this implies that the higher rate gain can be achieved by low-cost passive beamforming of properly deployed IRSs instead of active but expensive massive MIMO antennas.
Therefore, the IRS can be viewed as a promising candidate for massive MIMO 2.0 technology.

\begin{figure}[h]
	\centering{}\includegraphics[width=4in]{./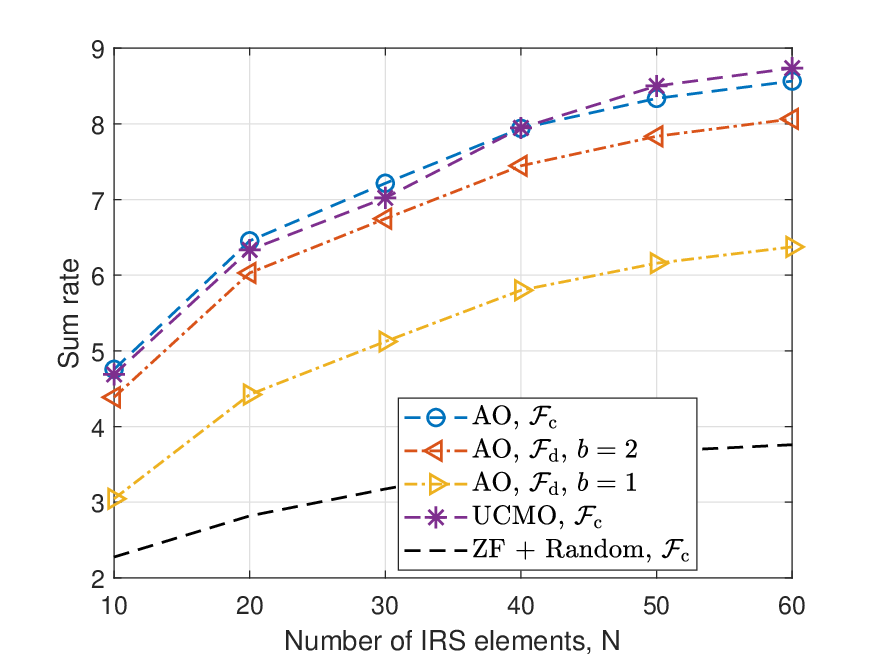}
	\caption{Achievable sum-rate versus the IRS elements $N$ of different methods.}
	\label{fig:11}
\end{figure}

Considering that the reflective elements cannot be implemented with the infinite resolution values in practice, the set of the low-resolution phase shifts at each element is given by
\begin{align}
\mathcal F_{\text{d}} = \left\{ \theta_{g,m} = e^{\mathrm{j} \varphi_{g,m}} \Big\vert
\varphi_{g,m} \in \left\{  \frac{2\pi i}{2^B}  \right\}_{i=0}^{2^B-1} \right\}, \label{eq:64}
\end{align}
where $B$ denotes the phase resolution in number of bits, and $B \rightarrow \infty$ corresponds to the continuous phase-shift set $\mathcal F_{\text{c}}$.
In Fig. \ref{fig:11}, we plot the sum-rate versus the number of reflecting elements under the single-IRS system. ZF precoding with a random reflection pattern method is selected as the baseline. We see that the proposed AO and UCMO algorithms outperform the baseline method in terms of sum-rate.
Besides the performance, the complexity superiority of the proposed UCMO approaches over the SDR method.
In addition, the 1-bit phase-shifter still stably precedes the baseline method, while the 2-bit phase-shifter has an overwhelming advantage.

\begin{figure}[h]
	\centering{}\includegraphics[width=4in]{./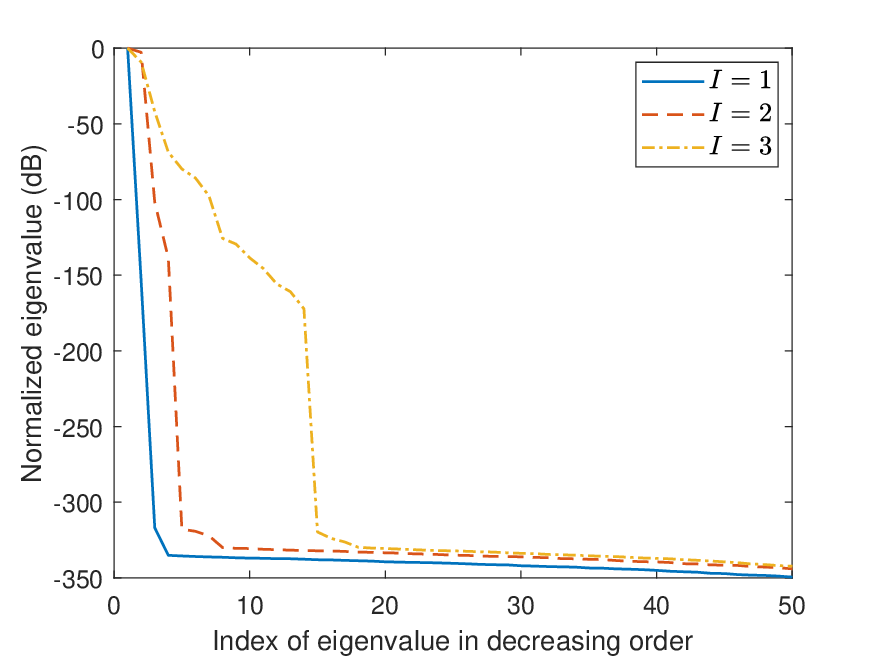}
	\caption{Normalized eigenvalues of the channel correlation matrix versus the number of IRS units.}
	\label{fig:12}
\end{figure}

In Fig. \ref{fig:12}, we assess the impact of distributed IRS deployment on the channel rank. To characterize the channel DoF, we adopt the normalized eigenvalues of the channel correlation $\tilde{\mathbf{H}} \tilde{\mathbf{H}}^{\sf H}$ as the indicator. For single-IRS cases where only IRS unit 1 is activated, only 2\textasciitilde 3 valid independent channels can carry signal streams. This is due to the sparse BS-IRS channel, which can serve up to 3 users. The number of valid independent channels is expanded for distributed-IRS cases with $I > 1$. This confirms the channel rank improvement of the distributed IRS scheme.

\begin{figure}[h]
	\centering{}\includegraphics[width=4in]{./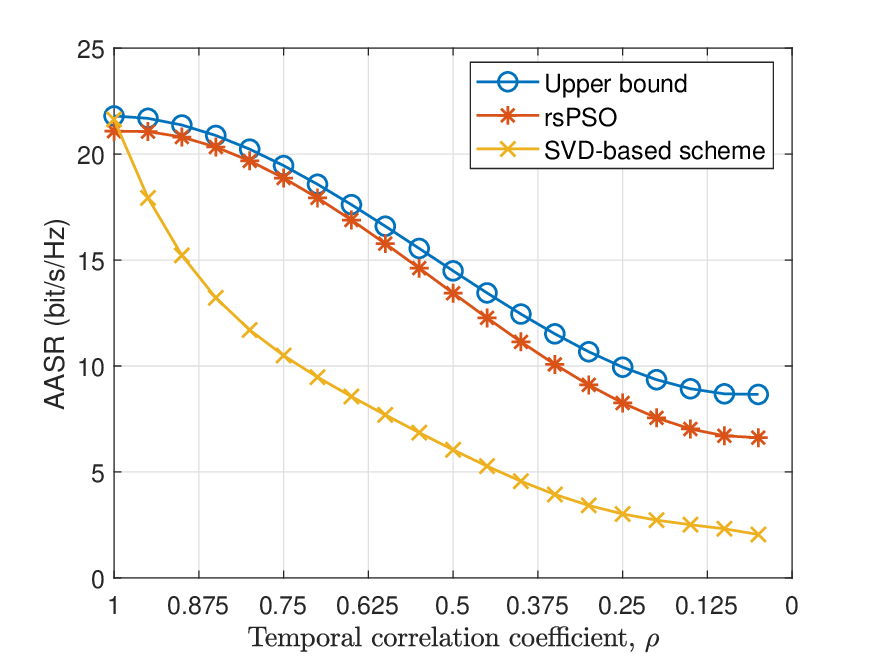}
	\caption{AASR against temporal correlation coefficient $\rho$.}
	\label{fig:13}
\end{figure}

Finally, we validate the two-timescale reflection optimization method by considering time-varying MIMO channels.
Fig. \ref{fig:13} presents the AASRs achieved by the different schemes against the temporal correlation coefficient $\rho$.
A larger temporal correlation coefficient $\rho$ indicates a smaller time delay $\tau$, thus resulting in a smaller channel error between consecutive slots.
In the extreme case where the O-CSI is independent of the I-CSI, the precoding matrices of the schemes are independent of the I-CSI and equivalent to random precoding schemes.
We see that more severe O-CSI leads to a lower AASR. This confirms the importance of exploiting the channel correlation between consecutive time slots to improve the AASR performance.
Under the nearly independent channel conditions, i.e., $\rho \rightarrow 0$, the SVD precoder based on the O-CSI is, in essence, an arbitrary precoder. Nevertheless, the proposed SVD-ZF-based scheme can still outperform the SVD-based scheme based on the O-CSI because of its use of the ZF detector to suppress inter-stream interference based on the I-CSI at the UE.
In the case of $\rho=1$, the CSI remains unchanged and never outdated throughout all time slots. It is seen that the conventional SVD scheme can achieve the same AASR as the upper bound since SVD precoding is known to be optimal under the I-CSI.

\section{Conclusion}\label{sec:8}
In this chapter, slot-by-slot and two-timescale reflection pattern design schemes are designed. Firstly, the distributed IRS-assisted mmWave massive MIMO system is built, which can overcome the channel blockage with passive meta-atoms and improve channel rank to enhance the signal coverage and channel DoF.
Then, slot-by-slot IRS reflection pattern optimization is formulated to reap the maximum system weighted sum-rate in the considered mmWave system. To maximize the sum-rate, the precoder at the BS and the reflection pattern of the distributed IRS are jointly optimized in each time slot. The AO framework is developed, where the popular SDP algorithm and the low-complexity UCMO algorithm are proposed to obtain the reflection pattern, respectively.
Next, to avoid the heavy overhead configuration signaling and frequently cascaded channel estimation in time-variant MIMO channels, a two-timescale reflection pattern optimization strategy is devised to maximize the long-term AASR of the IRS-assisted MIMO system. Specifically, the small-timescale transmit beamformer and power allocation at the BS are updated in every slot with the SVD-ZF scheme to mitigate the performance deterioration resulting from outdated I-CSI. An rsPSO algorithm is proposed to solve the large-timescale IRS reflection pattern efficiently. The proposed rsPSO algorithm can dramatically reduce the total fitness evaluation complexity with light sampling of random channels.

\section{Future Work}\label{sec:9}
Recent research in the realm of IRS-assisted wireless networks has indicated the important role of the IRS in the convergence of communication, sensing and computing (CSC) for future mobile systems.
However, The study domain of the IRS-assisted CSC system is still in its infancy.
This section highlights possible research directions for the future investigation of the IRS. 

\begin{itemize}
    \item For the multi-cell multi-IRS scenario, each cellular network is deployed by a distinct network operator. BSs or cell-edge MUs may be interfered with by some IRSs from other operators, as IRSs cannot realize strictly narrow-band response. Thus, interference management between different operators is a vital research direction for IRS-assisted communications.
    \item Some studies on IRS-enabled modulation and precoding schemes have initially shown the potential of wave-based computing~\cite{8928065, 10158690}. Holographic IRS can generate desired beams or signals by controlling the radiation amplitude and phase of the EM wave. The resulting computing ability may not meet the complicated tasks for the current low-cost phase-shift-only IRS. For this reason, active IRS or few-layer IRS can provide more wave-control DoFs.
    \item Since IRSs can provide virtual LoS links, facilitating the sensing functions for the integrated sensing and communication (ISAC) network. Moreover, the near-field channel may be necessary for the extremely large IRS, and a near-field LoS channel can be rank-sufficient due to the spherical wavefront. As near-field channel involves both distance and angle information, they can benefit both sensing and computing more. This implies that the related information-theoretic limits analysis and near-field range extension are worth noting in future ISAC directions.
    \end{itemize}






\ifx\weAreInMain\undefined                                               
\bibliographystyle{IEEEtran}
\bibliography{bibchapter.bib}                                
\end{document}                                                           
\fi                                                                      